\documentclass[11pt,letterpaper]{article}

\usepackage[english]{babel}

\usepackage{graphicx} 
\usepackage[margin=1in]{geometry}
\usepackage[english]{babel}

\usepackage{xspace}
\usepackage{booktabs}
\usepackage{caption}
\usepackage{subcaption}
\usepackage{amssymb}
\usepackage{amsmath}
\usepackage{amsthm}
\usepackage{bbm}
\usepackage{graphicx}
\usepackage{multicol}
\usepackage{diagbox}
\usepackage{etoolbox}
\usepackage{pgfplots}
\pgfplotsset{compat=1.18}
\usepackage[colorlinks=true, allcolors=teal]{hyperref}
\usepackage{enumerate}    
\usepackage[ruled,linesnumbered,vlined]{algorithm2e}
\newtheorem{theorem}{Theorem}[section]

\newtheorem{lemma}[theorem]{Lemma}

\usepackage{tikz}
\usepackage{siunitx}
\usepgfplotslibrary{fillbetween}
\usepackage{framed}
\usepackage{mdframed}
\usepackage{xcolor}
\definecolor{shadecolor}{gray}{0.85}
\usepackage{tcolorbox}
\tcbset{
  myframe/.style={
    colback=gray!5,
    colframe=black,
    boxrule=0.8pt,
    arc=4pt,
    outer arc=4pt,
    boxsep=5pt,
    left=5pt,
    right=5pt,
    top=5pt,
    bottom=5pt,
    enhanced,
    sharp corners
  }
}
\DeclareFontShape{OT1}{cmr}{m}{scit}{
    <-> ssub * cmr/m/sc 
}{}

\newmdenv[
  linecolor=gray,
  linewidth=0.9pt,
  leftmargin=0.6em,
  rightmargin=0.6em,
  backgroundcolor=gray!5,
  innertopmargin=0.5em,
  innerbottommargin=0.5em,
  skipabove=0.5em,
  skipbelow=0.5em,
  frametitle={},
]{researchquestion}

\newcommand{\ALG}{\mathsf{ALG}}
\newcommand{\OPT}{\mathsf{OPT}}
\newcommand{\E}{\mathbb{E}}

\renewcommand{\d}{\mathrm{d}}

\title{Minimization Prophet Inequality with Bounded Costs}

\author{
Haolong Li 
\thanks{University of Macau. 
\{yc47435, xiaoweiwu\}@um.edu.mo}
\and 
Xiaowei Wu $^*$
}
\date{}

\begin{document}

\begin{titlepage}
\maketitle
\thispagestyle{empty}

\begin{abstract}
We study the \emph{cost-minimization prophet inequality} problem, in which a decision-maker sequentially observes $n$ independent and identically distributed (IID) random variables. After each observation, the decision-maker must either accept the current realization and stop, or reject it and continue with the next variable. The goal is to minimize the selected value. Unlike the classical maximization setting, if none of the first $n-1$ values is accepted, the final realization must be selected.
Esfandiari et al.\ (SIDMA 2017) showed that this problem does not admit constant-competitive online algorithms in general, which motivates subsequent work on restricted distribution classes, such as entire distributions (Livanos and Mehta, SODA 2024) and distributions with bounded extreme values (Livanos and Mehta, EC 2025).

In this work, we focus on distributions with bounded support. For distributions supported on $[1,b]$, we characterize the competitive ratio of online algorithms as a function of both $b$ and $n$. We consider both distribution-aware and distribution-oblivious settings. In the distribution-aware case, we provide a nearly tight characterization of the optimal online algorithm, showing that its competitive ratio is at most $b^{(1-1/n)^n}$ for all $n\geq 2$.
More significantly, we design a distribution-oblivious algorithm that achieves the same guarantee, with a sequence of predetermined thresholds that depend only on $b$ and $n$.
Furthermore, we extend our analysis to the non-IID setting, where we show that a simple single-threshold algorithm attains an asymptotically optimal competitive ratio of $\Theta(\sqrt{b})$.
\end{abstract}

\end{titlepage}
\section{Introduction}
Prophet inequality is a central topic in optimal stopping theory, initiated by Krengel and Sucheston~(\cite{krengel1977semiamarts, krengel1978semiamarts}). 
In this problem, $n$ random variables $X_1,\ldots, X_n$ reveal their values sequentially.
Upon observing each realization, the algorithm must make an irrevocable accept-or-reject decision. 
If a variable is selected, the process terminates; otherwise, the algorithm proceeds to the next variable. 
The objective is to maximize the selected value. 
The performance of an online algorithm is measured by its competitive ratio, which is defined as the ratio between its expected payoff and the expected maximum realization (the so-called ``prophet''). 
Krengel and Sucheston showed that the optimal competitive ratio is $1/2$, which is achievable via a simple single-threshold algorithm.
Over the past decades, numerous variants of the prophet inequality problem have been studied, including settings with random arrivals~(\cite{siamdm/EsfandiariHLM17, sigecom/CorreaFHOV17, conf/soda/EhsaniHKS18, sigecom/AzarCK18, DBLP:journals/mp/CorreaSZ21, conf/soda/Harb25}), order selection constraints~(\cite{sigecom/AzarCK18, journals/ior/BeyhaghiGLPS21, stoc/ChawlaHMS10, soda/CorreaSZ19, DBLP:conf/focs/PengT22, DBLP:conf/sigecom/BubnaC23}), and independent and identically distributed (IID) variables~(\cite{hill1982comparisons, stoc/AbolhassaniEEHK17, journals/mor/CorreaFHOV21}).

In a complementary direction, researchers have considered the cost-minimization variant, where $X_1,\ldots, X_n$ represent costs and the objective is to minimize the selected value. 
Here, the prophet corresponds to the expected minimum realization. 
Unlike the maximization setting, if none of the first $n-1$ values is accepted, the algorithm must select the final realization, as otherwise rejecting all variables would be trivially optimal. 
Despite extensive progress on the maximization version, where constant competitive ratios are known in many settings, the minimization variant presents fundamentally different challenges due to its upward-closed structure and remains much less understood.
Esfandiari et al.~\cite{esa/EsfandiariHLM15} were among the first to study this problem and established a strong hardness result using the following two-variable instance:
\begin{align*}
    X_1 = 1, \quad 
    X_2 = \begin{cases}
        \frac{1}{\epsilon} & \text{with probability } \epsilon, \\
        0 & \text{with probability } 1-\epsilon.
    \end{cases}
\end{align*}
In this example, the optimal online algorithm incurs an expected cost of $1$ (since $X_1 = \E[X_2] = 1$), whereas the prophet’s expected cost is $\epsilon$, implying that no constant competitive ratio is achievable for general distributions. 
They later extended this construction to IID distributions~\cite{siamdm/EsfandiariHLM17}, showing that even in the IID setting, constant-competitive algorithms do not exist.
Livanos and Mehta~\cite{conf/soda/LivanosM24} further strengthened this hardness result by exhibiting an instance with $n=2$ IID variables supported on $[1,+\infty)$, with cumulative distribution function (CDF)
\begin{equation*}
    F(x) = 1 - \frac{1}{x}.
\end{equation*}
For this distribution, $\E[X] = +\infty$, so the expected cost of any online algorithm is unbounded, whereas the prophet achieves a constant expected cost, ruling out any constant competitive ratio.

Motivated by these strong negative results, prior work has focused on restricted classes of distributions. 
Livanos and Mehta~\cite{conf/soda/LivanosM24} studied \emph{entire} distributions, whose cumulative hazard rate admits a convergent Puiseux series expansion over the support. 
For this class, they showed that the optimal online algorithm achieves a constant-factor approximation, where the constant depends on the distribution’s hazard rate. 
Subsequently, Livanos and Mehta~\cite{conf/sigecom/LivanosM25} employed Extreme Value Theory (EVT) to analyze the problem. 
In the large-market regime as $n \to \infty$, they established that the asymptotic competitive ratio of the optimal online algorithm converges to a closed-form function $\Lambda(\gamma)$, which depends only on the distribution’s extreme value index $\gamma$.

While these frameworks capture a broad range of distributions, they inherently assume unbounded support. 
In contrast, many real-world cost-minimization problems feature a bounded ratio between the maximum and minimum realizations, i.e., distributions with bounded support. 
We illustrate this with the following example:

\begin{tcolorbox}[colback = gray!15, colframe = gray!15]
\textbf{Flight Ticket Booking:}
A researcher needs to purchase a flight ticket to attend a conference. 
Ticket prices fluctuate over time due to seasonal demand, fuel prices, and airline competition, and can thus be modeled as random. 
Before the conference deadline, the researcher can repeatedly check prices (e.g., daily or weekly), but must make an irrevocable take-it-or-leave-it decision at each observation. 
Since attendance is mandatory for accepted papers, a ticket must be purchased before the deadline.
\end{tcolorbox}

In this example, ticket prices can reasonably be modeled as IID, decisions are irrevocable, and the objective is to minimize the purchase price subject to a deadline.
All these features precisely match the cost-minimization prophet inequality framework. 
Importantly, practical considerations such as pricing regulations and market constraints imply that prices are typically bounded.
For instance, it is unlikely to observe prices below \$300 or above \$3{,}000 under normal conditions.

This motivates the study of distributions with bounded support. 
By normalization, we may assume the support is $[1,b]$, where $b>1$ captures the ratio between the largest and smallest possible realizations.
We refer to such distributions as \emph{$b$-bounded}. 
For $b$-bounded distributions, any online algorithm trivially has a competitive ratio at most $b$, and it is natural to ask whether significantly better guarantees are achievable. 
This leads to the following central question:
\begin{center}
\emph{What is the optimal competitive ratio, as a function of $b$, for cost-minimization prophet \\inequality with $b$-bounded distributions?}
\end{center}

Notably, truncating the two-variable IID hard instance of Livanos and Mehta~\cite{conf/soda/LivanosM24} with CDF $F(x)=1-1/x$ to the range $[1,b]$ yields an instance for which the optimal online algorithm has a competitive ratio of $\Omega(\ln\ln b)$. 
It remains open whether an $O(\ln\ln b)$-competitive algorithm exists, or whether stronger lower bounds hold.

\smallskip

Furthermore, in many practical settings, while the range of possible costs may be known, the underlying distribution is not. 
For example, airline prices depend on numerous unpredictable factors, including fuel costs, demand fluctuations, and geopolitical conditions, which makes it unrealistic to assume that the underlying distribution is known. 
This motivates a second fundamental question:
\begin{center}
\emph{Can we design competitive online algorithms for unknown $b$-bounded distributions?}
\end{center}

In this paper, we address both of these questions by studying cost-minimization prophet inequalities under known and unknown $b$-bounded distributions.

\subsection{Our Contributions}

Our first result addresses the open question regarding the competitive ratio of the optimal online algorithm for known $b$-bounded distributions. We provide a nearly tight characterization (up to polylogarithmic factors) of this ratio.

\begin{theorem}
\label{thm:informal-online-optimum}
For the cost-minimization IID prophet inequality with a known $b$-bounded distribution, the optimal online algorithm achieves a competitive ratio of $b^{(1-1/n)^n}$. Moreover, there exists an instance for which its ratio is at least $\frac{b^{(1-1/n)^n}}{1+\ln b}$.
\end{theorem}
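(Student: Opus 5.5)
The plan is to prove the two parts of Theorem~\ref{thm:informal-online-optimum} separately. For the upper bound, I would not analyze the optimal dynamic program directly. Instead I would exhibit one explicit (possibly suboptimal) online policy and bound its cost, since the optimal online algorithm can only do better. The optimal policy is the backward-induction one: with $V_1=\E[X]$ and $V_{m+1}=\E[\min(X,V_m)]$, it accepts $X_i$ iff $X_i\le V_{n-i}$.

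For the comparison policy I would use quantile thresholds. Let $F$ be the CDF and let $q_i$ be the quantile accepted at step $i$, so that conditioned on reaching step $i$ we stop with probability about $1/n$. Then the probability of reaching the forced last step is about $(1-1/n)^{n-1}$. The prophet satisfies $\E[\min_i X_i]=\int_0^b (1-F(x))^n\,\d x\ge 1$, so it suffices to prove $\ALG\le b^{\alpha}\,\E[\min_i X_i]^{1-\alpha}$ with $\alpha=(1-1/n)^n$. This is stronger because $\E[\min]\ge 1$.

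I would pass to log scale, $Y=\ln X\in[0,\ln b]$. Write the algorithm's cost as a mixture, over the stopping index, of conditional expectations of $X$ on quantile bands. I would then apply weighted AM--GM/H\"older with weights equal to the stopping probabilities. The aim is to show that the mass forced to the last step costs at most $b$, while the remaining mass costs at most a power of the prophet's value. Combining the two pieces with exponents $\alpha$ and $1-\alpha$ should give $b^\alpha\E[\min]^{1-\alpha}$.

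I expect the main obstacle to be making the exponent come out as exactly $(1-1/n)^n$ rather than $(1-1/n)^{n-1}$ or $1/e$. I would first try to prove the cleaner inductive statement $V_m\le b^{(1-1/n)^m}\cdot(\text{prophet for }m)^{1-(1-1/n)^m}$ directly for the DP values, with the inductive step $\E[\min(X,v)]\le \E[X]^{1/n}v^{1-1/n}$-type via H\"older. If that step fails for general $v$, I would instead tune the quantile levels per step, which makes the exponent a product $\prod_i(1-p_i)$, and optimize $p_i=1/n$.

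For the lower bound, I would truncate the Livanos--Mehta distribution. Take $F$ with $1-F(x)=x^{-\beta}$ on $[1,b)$ and an atom at $b$, with $\beta$ chosen near $1/n$ so that the prophet stays $\Theta(1+\ln b)$ at worst. The exponent should be chosen so that the DP recursion $V_{m+1}=\E[\min(X,V_m)]$ can be solved in closed form. Since $\E[\min(X,v)]=1+\int_1^v x^{-\beta}\,\d x$, the recursion becomes an iterated power map whose $n$-fold composition from $V_1\approx b^{1-\beta}$ yields $V_n\approx b^{(1-1/n)^n}$. Bounding $\E[\min]=\int (1-F)^n\le 1+\ln b$ then gives the ratio $b^{(1-1/n)^n}/(1+\ln b)$. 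The remaining checks for this direction are routine calculus on the iterated map.
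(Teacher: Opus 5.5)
Your lower bound is the paper's construction and is fine, provided you take the tail exponent exactly $1/n$ (so $\Pr[X>x]=x^{-1/n}$ on $[1,b)$ and $\OPT=1+\ln b$) and turn the ``$\approx$'' into an inequality. With $\gamma=1-1/n$ the recursion is $V_k=1+\frac{1}{\gamma}(V_{k-1}^{\gamma}-1)\ge V_{k-1}^{\gamma}$, and $n$ applications starting from $V_0=b$ give $V_n\ge b^{\gamma^n}$.

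The upper bound, however, has a genuine gap. Your primary route cannot work. $\ALG$ is an arithmetic mixture over stopping times, so a branch of probability $\alpha$ costing up to $b$ contributes $\alpha b$, not $b^{\alpha}$; weighted AM--GM runs in the wrong direction for the bound you want. Concretely, take any policy whose per-step stopping probabilities $p_i<1$ depend only on $n$. It reaches the forced last step with some probability $c_n>0$ independent of $b$; for $p_i=1/n$, $c_n=(1-1/n)^{n-1}\ge 1/e$. On the instance above, $\E[X]\ge b^{1-1/n}$ while $\OPT=1+\ln b$. Such a policy therefore has ratio at least $c_n b^{1-1/n}/(1+\ln b)$, which exceeds $b^{(1-1/n)^n}$ for large $b$. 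Tuning the $p_i$ only changes the probability multiplying $\E[X]$; it never produces an exponent on $b$.

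Your fallback has the right skeleton, namely induction on the DP values with a H\"older step, but it uses the wrong quantities. The inequality $\E[\min(X,v)]\le \E[X]^{1/n}v^{1-1/n}$ is trivially true, being a geometric mean of the two bounds $\E[X]$ and $v$. It is also useless, since $\E[X]$ can be of order $b^{1-1/n}$ while $\OPT=O(\ln b)$. Separately, a hypothesis phrased via the $m$-variable prophet $\int_0^b G^m$ does not close: the step $m\to m+1$ would have to trade $\int_0^b G^m$ for the smaller $\int_0^b G^{m+1}$. The missing idea is to apply H\"older with exponents $n$ and $n/(n-1)$ to $G(x)\cdot\mathbf{1}(x\le v)$, where $G=1-F$. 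Then the $n$-th power of $G$ integrates to the $n$-variable prophet at every level:
\[
\E[\min(X,v)]=\int_0^{v}G(x)\,\d x\le\Bigl(\int_0^b G(x)^n\,\d x\Bigr)^{1/n}v^{1-1/n}=\OPT^{1/n}\,v^{1-1/n}\qquad\text{for all } v\in[1,b].
\]
Holding $\OPT$ fixed, induction gives $V_k\le \OPT^{1-\gamma^k}b^{\gamma^k}$ for all $k$. Hence $V_n\le \OPT\cdot b^{\gamma^n}$, using $\OPT\ge 1$, and no auxiliary comparison policy is needed.
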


Our result almost settles the competitive ratio of the optimal online algorithm for $b$-bounded distributions. Our hard instance generalizes and strengthens that of~\cite{conf/soda/LivanosM24}, and may be of independent interest.

\medskip

We then turn to distribution-oblivious algorithms, which know only the support $[1,b]$ but not the exact CDF of the distribution. Without knowledge of the CDF, the optimal online algorithm, which is defined via dynamic programming (see Section~\ref{sec:online-optimum}), is no longer well-defined. Instead, the distribution-oblivious algorithms are determined solely by the parameters $b$ and $n$.
A natural class of such algorithms specifies thresholds $\theta_1,\theta_2,\ldots,\theta_n$ (as functions of $b$ and $n$) to decide whether to accept variables $X_1,X_2,\ldots,X_n$, respectively. These thresholds should be non-decreasing, which reflects the increased willingness to accept higher costs as fewer variables remain. Moreover, to ensure that at least one variable is selected, it is necessary that $\theta_n = b$. The key question is how to set the first $n-1$ thresholds.
In this work, we study both single-threshold algorithms\footnote{Due to the nature of the problem, a single-threshold algorithm for minimization effectively uses two thresholds, as the last variable is accepted with threshold $b$.} (where $\theta_1 = \cdots = \theta_{n-1} = \theta$) and multi-threshold algorithms. For both classes, we establish nearly optimal competitive ratios. Notably, the competitive ratio achieved by our multi-threshold distribution-oblivious algorithm matches that of the optimal online algorithm.

\begin{theorem}
\label{thm:informal-single-threshold}
For the cost-minimization IID prophet inequality with an unknown $b$-bounded distribution, there exists a single-threshold algorithm with competitive ratio $O(b^{(n-1)/(2n)})$, which is asymptotically optimal.
\end{theorem}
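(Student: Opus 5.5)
The plan is to analyze the algorithm that uses a single threshold $\theta = b^{\alpha}$ on $X_1,\ldots,X_{n-1}$ and accepts $X_n$ unconditionally. The exponent $\alpha$ will be fixed by balancing two error terms, and I expect the balance to land at $\alpha=\frac{n-1}{2n}$, the value that makes the stated ratio $b^{(n-1)/(2n)}$ come out. I have not verified that this balance actually yields that exponent (see the last paragraph). Write $p=\Pr[X\le\theta]$ and $q=1-p$. The expected cost of the algorithm decomposes exactly as
\[
\ALG \;=\; \bigl(1-q^{n-1}\bigr)\,\E[X\mid X\le\theta] \;+\; q^{n-1}\,\E[X].
\]
The prophet satisfies $\OPT=\E[\min_i X_i]$, and I will use two lower bounds on it: $\OPT\ge 1$, and a tail bound
\[
\OPT\;\ge\; (1-q^n)\cdot 1 + q^n\,\E[X\mid X>\theta]\;\ge\;(1-q^n)+q^n\theta .
\]

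For the upper bound I treat the two terms of $\ALG$ separately. The first term is at most $\theta\le\theta\cdot\OPT$. The crude bound on the second term is $b\,q^{n-1}/\bigl(1+\theta q^n\bigr)$. This is maximized near $q^n\approx 1/\theta$, where it equals roughly $b\,\theta^{-(n-1)/n}$. Balancing against $\theta$ gives only $b^{n/(2n-1)}$, which is too weak. I therefore need two refinements.

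\begin{enumerate}
\item Compare the first-phase cost with $\E[\min_i X_i;\ \min_i X_i\le\theta]$ rather than with $\theta$. When $p$ is small, the minimum falls below $\theta$ with probability about $np$, and it costs about $\E[X;X\le\theta]$ per unit of that probability. So the first phase should cost $O(\OPT)$ plus a term paid only when $p$ is large.
\item Bound the tail term by $q^{n-1}\E[X\mid X>\theta]$ and compare it with $q^n\,\E[\min\mid\text{all}>\theta]$, using that the truncated values lie in $[\theta,b]$. The loss ratio here should then be governed by $b/\theta$ times a factor depending on $q$.
\end{enumerate}

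The main step is a case split on $q$, with the cut at $q^{n}\approx b^{-(n-1)/(2n)}$ (or a similar cut produced by the balancing). In each case one of the two terms dominates. I would show that each term is $O(b^{\alpha})\cdot\OPT$, and then optimize $\alpha$.

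For the asymptotic optimality claim, I would construct a family of two- or three-point distributions on $\{1,\theta',b\}$ that defeats every fixed threshold $\theta$.
\begin{itemize}
\item If $\theta$ is small, take a distribution that puts tiny mass at $1$ and the rest just above $\theta$. The algorithm is then forced to the last variable, whose expectation is large compared with the prophet.
\item If $\theta$ is large, take a distribution with mass near $1$ that is rarely hit, plus a point just below $\theta$. The algorithm accepts that point immediately while the prophet pays about $1$.
\end{itemize}
Optimizing the masses (of order $b^{-1/2}$ or $b^{-1/(2n)}$) should give $\Omega(b^{(n-1)/(2n)})$ against every choice of $\theta$.

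The main obstacle I anticipate is the tail-term refinement (item 2 above), that is, bounding $q^{n-1}\E[X]/\OPT$ sharply. The crude estimate $\OPT\ge\theta q^n$ loses a factor that ruins the exponent. I would first try conditioning on the number of samples below $\theta$ and bounding $\E[X\mid X>\theta]$ via $\E[X]\le b$ only when $q^n$ is large. If that fails to give $(n-1)/(2n)$, I would recompute the exponent that the balancing actually yields and adjust the choice of $\theta$ accordingly.
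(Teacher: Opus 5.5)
There is a genuine gap. The upper bound is only a plan, and your two ``refinements'' supply no inequality that could replace the lossy estimates $\E[X\mid X\le\theta]\le\theta$ and $\E[X]\le b$. The missing idea is H\"older's inequality against $\OPT=\int_0^b G(x)^n\,\d x$: for every $t\le b$,
\[
\int_0^t G(x)\,\d x=\int_0^b G(x)\mathbf{1}(x\le t)\,\d x\le \OPT^{1/n}\,t^{(n-1)/n}.
\]
Using $F(\theta)\E[X\mid X\le\theta]=\int_0^\theta G-\theta q$, your decomposition gives $\ALG\le\int_0^\theta G+q^{n-1}(\E[X]-\theta)$. Now apply H\"older with $t=\theta$ and $t=b$, together with $q\le(\OPT/\theta)^{1/n}$. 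That last estimate is exactly your tail bound $\OPT\ge q^n\theta$, and it is \emph{not} where the exponent is lost. The result is $\ALG/\OPT\le\theta^{(n-1)/n}+(b/\theta)^{(n-1)/n}$, with no case split on $q$ needed.

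This bound also shows you have conflated the threshold exponent with the ratio exponent. The bound is minimized at $\theta=\sqrt b$, independent of $n$, and your anticipated threshold $\theta=b^{(n-1)/(2n)}$ actually fails. For that $\theta$, take $X=1$ w.p.\ $1-\theta^{-1/n}$, $X=\theta'\in(\theta,2\theta]$ w.p.\ $\theta^{-1/n}-b^{-1/n}$, and $X=b$ w.p.\ $b^{-1/n}$. Then $\OPT\le 1+\theta'/\theta+1\le 4$, while $\ALG\ge q^{n-1}\E[X]\ge\theta^{-(n-1)/n}b^{(n-1)/n}=b^{(n^2-1)/(2n^2)}$. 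This exceeds $b^{(n-1)/(2n)}$ by a factor $b^{(n-1)/(2n^2)}$.

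Your lower-bound sketch also breaks in the small-$\theta$ case. Tiny mass at $1$ with the rest just above $\theta$ gives $\ALG\approx\OPT\approx\theta$, so there is no gap. Two features are needed instead:
\begin{enumerate}
\item a \emph{large} mass at $1$, so that $\OPT=O(1)$ yet the first $n-1$ draws all miss $1$ with probability about $b^{-(n-1)/(2n)}$;
\item a mass $b^{-1/n}$ at $b$, which inflates $\E[X]$ to $b^{(n-1)/n}$ while adding only $O(1)$ to $\OPT$.
\end{enumerate}
The paper's single three-point instance on $\{1,\sqrt b,b\}$, with $G(1)=b^{-1/(2n)}$ and $G(\sqrt b)=b^{-1/n}$, handles $\theta<\sqrt b$ and $\theta\ge\sqrt b$ at once.
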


\begin{theorem}
\label{thm:informal-multi-threshold}
For the cost-minimization IID prophet inequality with an unknown $b$-bounded distribution, there exists a multi-threshold algorithm achieving a competitive ratio of $b^{(1-1/n)^n}$.
\end{theorem}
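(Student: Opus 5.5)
The plan is to use a backward recursion with geometrically spaced thresholds, and to compare the algorithm to the prophet through a single quantile parameter. Concretely, I would fix thresholds of the form $\theta_i = b^{\alpha_i}$ with $\theta_n = b$ (so $\alpha_n=1$). The exponents are then chosen by a backward recursion $\alpha_{i} = (1-1/n)\,\alpha_{i+1}$, or a close variant, tuned so that the final bound collapses to $b^{(1-1/n)^n}$. The guiding heuristic is that $(1-1/n)^n$ is the probability that $n$ independent trials with success probability $1/n$ all fail. In $\log_b$-scale, the ratio $b^{(1-1/n)^n}$ should arise from one multiplicative loss factor $b^{1/n}$-type per step, compounded geometrically. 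The thresholds depend only on $b$ and $n$, so the algorithm is distribution-oblivious.

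The first step is to set up the algorithm's cost-to-go. Write $V_{n+1}$ for the convention that the last variable must be taken. Then
\[
V_i = \E\bigl[X\,\mathbbm{1}\{X\le \theta_i\}\bigr] + \bigl(1-F(\theta_i)\bigr)V_{i+1},\qquad V_n=\E[X],
\]
so that $\E[\ALG]=V_1$. The second step is to lower bound the prophet via the tail formula
\[
\E\bigl[\min_j X_j\bigr] = 1+\int_1^b \bigl(1-F(x)\bigr)^n\,\d x .
\]
This bound is the one that makes heavy-tailed distributions (of the truncated $1-1/x$ type) the natural worst case. The third step is to bound $V_1$ by splitting $\Pr[\ALG> x]$ according to which interval $[\theta_{j-1},\theta_j)$ contains $x$. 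For $x$ in that interval, the event $\ALG> x$ forces every $X_i$ with $i<j$ to exceed $\theta_i$, and also forces the eventually selected value to exceed $x$. This gives
\[
\Pr[\ALG>x]\le \prod_{i<j}\bigl(1-F(\theta_i)\bigr)\cdot\bigl(1-F(x)\bigr).
\]
The resulting integral can then be compared termwise with $\int (1-F(x))^n\,\d x$.

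The fourth step, which I expect to be the main obstacle, is the worst-case comparison over all $F$. For that I would reparametrize by the quantiles $q_i = 1-F(\theta_i)$. The ratio then becomes a function of $(q_1,\dots,q_n)$ and of how the mass is spread inside each interval. Within each interval, I would argue that the adversary's optimal choice pushes mass to the interval endpoints or makes $1-F$ of power-law form, which reduces the problem to a finite-dimensional optimization. On that reduced problem I would apply weighted AM--GM / Jensen in $\log$-scale. Roughly, each term has the shape $\theta_j\prod_{i<j}q_i$ against the prophet term $\theta_{j-1}\, q_{j-1}^{\,n}$. The recursion for the $\alpha_i$ is designed so that AM--GM with weights $1/n$ and $1-1/n$ exactly absorbs the gap, yielding a per-step factor $b^{\alpha_{i}-\alpha_{i-1}}$ that telescopes to $b^{(1-1/n)^n}$. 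If the clean recursion fails to close, my fallback is to choose the $\alpha_i$ by solving the equalization condition that every step's worst case gives the same ratio. This mirrors how the optimal online algorithm's dynamic-programming thresholds behave in the analysis behind Theorem~\ref{thm:informal-online-optimum}. I would then verify that the common value is $b^{(1-1/n)^n}$.
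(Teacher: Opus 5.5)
Your thresholds are the right ones: $\alpha_i=(1-1/n)\alpha_{i+1}$ with $\alpha_n=1$ gives $\theta_i=b^{(1-1/n)^{n-i}}$, which is exactly the paper's choice. The gap is in the analysis.

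Step 4, where the theorem would actually be proved, is only a sketch. The reduction to endpoint or power-law distributions is asserted rather than shown. The bookkeeping as described also does not land on the target: the per-step exponents $\alpha_i-\alpha_{i-1}$ telescope to $\alpha_n-\alpha_1=1-(1-1/n)^{n-1}$, not $(1-1/n)^n$.

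More seriously, the quantity you propose to optimize is too weak to give the theorem at all. For $x\in[\theta_{j-1},\theta_j)$ the exact tail is
\[
\Pr[\ALG>x]=\prod_{i<j}G(\theta_i)\Bigl(G(x)-F(x)\sum_{k=j+1}^{n}\prod_{m=j}^{k-1}G(\theta_m)\Bigr).
\]
Step 3 discards the subtracted part, i.e.\ the chance that after $X_j>\theta_j$ a later variable falls in $[0,x]$ and is accepted. The resulting bound $B=\int_0^{\theta_1}G+\sum_{j\ge2}\prod_{i<j}G(\theta_i)\int_{\theta_{j-1}}^{\theta_j}G$ can exceed $b^{(1-1/n)^n}\OPT$.

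To see this, let $a_j=c\,\theta_j^{-1/n}$ and let $X$ equal $1$ with probability $1-a_1$, $\theta_j+\eta$ with probability $a_j-a_{j+1}$ for $1\le j<n$, and $b$ with probability $a_n$, where $\eta\to0$. Then $G(\theta_j)=a_j$ for $j<n$, and $\OPT\le 1+\sum_j a_j^n\theta_j+O(\eta)=1+nc^n+O(\eta)$. On the other side, $\theta_j\prod_{i\le j}a_i=c^j\,b^{(1-1/n)^n}$, which gives $B\ge(1-o(1))\,b^{(1-1/n)^n}\sum_{j=1}^n c^j$ as $b\to\infty$. For $n=10$ and $c=10^{-1/10}$ this is about $1.74\,b^{(1-1/n)^n}\OPT$, and with $c=n^{-1/n}$ the excess factor grows like $n/(2\ln n)$. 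So no worst-case analysis of $B$ can yield the stated ratio.

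The missing idea is the one that lets the paper avoid any extremal-distribution argument. Keep the exact expression $\ALG=\sum_t\prod_{i<t}G(\theta_i)\bigl(\int_0^{\theta_t}G-\theta_tG(\theta_t)\bigr)$, which comes from $F(\theta)\E[X\mid X\le\theta]=\int_0^\theta G-\theta G(\theta)$. Regroup it as $\int_0^{\theta_1}G+\sum_{t\ge2}\prod_{i<t}G(\theta_i)\bigl(\int_0^{\theta_t}G-\theta_{t-1}\bigr)$ minus a nonnegative term. Two estimates then finish the proof:
\begin{enumerate}
\item H\"older gives $\int_0^\theta G\le\OPT^{1/n}\theta^{(n-1)/n}$ for every $F$. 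Because $\theta_t^{(n-1)/n}=\theta_{t-1}$, each bracket is at most $\theta_{t-1}(\OPT^{1/n}-1)\ge0$.
\item The pointwise bound $G(\theta_i)\le(\OPT/\theta_i)^{1/n}$, together with the identity $\theta_{t-1}(\prod_{i<t}\theta_i)^{-1/n}=b^{(1-1/n)^n}$, turns the whole bound into $b^{(1-1/n)^n}\bigl(\OPT^{1/n}+\sum_{t=2}^n(\OPT^{t/n}-\OPT^{(t-1)/n})\bigr)=b^{(1-1/n)^n}\OPT$.
\end{enumerate}
The $-\theta_{t-1}$ terms, which your tail bound throws away, are precisely what make this sum telescope. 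Applied to your $B$, the same two estimates only give $b^{(1-1/n)^n}\sum_{t=1}^n\OPT^{t/n}$, a loss of up to a factor $n$.
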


We illustrate the exponent of the competitive ratios for the two algorithms in Figure~\ref{fig:exponent-bounds}.

\begin{figure}[ht]
\centering
\begin{tikzpicture}
\begin{axis}[
    width=0.6\linewidth,
    height=6.5cm,
    xlabel={$n$},
    ylabel={Exponents on $b$},
    xmin=2, xmax=15.3,
    ymin=0.2, ymax=0.6,
    xtick={2,3,4,5,6,7,8,9,10,11,12,13,14,15},
    ytick={0.25,0.5,0.75,1.0, 0.3679},
    yticklabels={0.25,0.5,0.75,1.0,$1/e$},
    grid=both,
    grid style={line width=0.2pt, draw=gray!15},
    major grid style={line width=0.4pt, draw=gray!30},
    legend pos=north west,
    legend cell align={left},
    legend style={font=\footnotesize, draw=gray!60, fill=white, rounded corners=1pt},
    samples at={2,3,...,15},
    domain=2:15,
    smooth,
    tick label style={font=\small},
    label style={font=\small},
]

\addplot+[mark=triangle*, mark options={scale=1.1, fill=cyan!30}, color=cyan!90!black, line width=1.3pt]
    { (x-1)/(2*x) };
\addlegendentry{UB of single-threshold}

\addplot+[mark=*, mark options={scale=1.0, fill=blue!30}, color=blue!90!black, line width=1.3pt]
    { pow(1-1/x, x) };
\addlegendentry{UB of multi-threshold}

\addplot+[no markers, line width=1pt, black, dashed, forget plot]
    { 1/2 };
\addplot+[no markers, line width=1pt, gray, dashed, forget plot]
    {1/exp(1)};

\end{axis}
\end{tikzpicture}
\caption{The blue line shows the upper bound on the exponents $(1-1/n)^n$ on $b$ for the multi-threshold algorithm; the cyan line shows that of the single-threshold algorithm, which is $(n-1)/(2n)$.}
\label{fig:exponent-bounds}
\end{figure}

To the best of our knowledge, our work is the first to study the minimization prophet inequality with unknown distributions. In particular, we partially address an open question posed in \cite{conf/soda/LivanosM24}: \emph{if one has only sample access to $D$, how does the competitive ratio of the optimal algorithm depend on the number of samples?} Our results show that, when the distribution has bounded support, a near-optimal online algorithm exists even without any samples.
Moreover, since our algorithms do not rely on distributional information, their design (e.g., the choice of thresholds) depends only on $b$ and $n$, which makes them computationally efficient. We further demonstrate that our approach extends to settings where only $k$ distinct thresholds are allowed (see Appendix~\ref{sec:k-threshold}), thereby partially addressing another open question of \cite{conf/soda/LivanosM24}: \emph{what if we are allowed to use at most $k$ thresholds for $k > 1$?}
We also emphasize that, unlike much of the existing literature, which focuses on asymptotic competitive ratios\footnote{As noted by existing works~\cite{conf/sigecom/LivanosM25,kong2026multiunit}, support-bounded distributions correspond to non-positive extreme value index, for which case the asymptotic competitive ratio is $1$.} (where the distribution is independent of $n$, and they consider the case of $n \to \infty$), all our results hold for every $n \ge 2$ and admit closed-form expressions. For instance, when $n=2$, both our single- and multi-threshold algorithms achieve a competitive ratio of $b^{1/4}$. Although the ratios are polynomial in $b$, the exponents are small, which leads to modest numerical values. As we show in Section~\ref{ssec:numerical}, when $b \le 10$ (a practically reasonable regime), the competitive ratio of our multi-threshold algorithm never exceeds $2.34$ for any $n$. This is particularly relevant given our motivation from practical cost-minimization problems.

\medskip

Our final contribution concerns the non-IID setting, where the $n$ variables may follow different distributions. In this case, we adapt the two-variable hard instance of Esfandiari et al.~\cite{esa/EsfandiariHLM15} to obtain a $b$-bounded instance with $b = \Theta(1/\epsilon^2)$:
\begin{align*}
X_1 = 1, \quad \text{and} \quad
X_2 = \begin{cases}
\frac{1}{\epsilon} & \text{with probability } \epsilon, \\
\epsilon & \text{with probability } 1-\epsilon.
\end{cases}
\end{align*}

For this instance, the optimal online algorithm has a competitive ratio of $\Theta(1/\epsilon)$, which implies an $\Omega(\sqrt{b})$ lower bound for the non-IID setting. A similar lower bound can be shown for the prophet secretary setting. This naturally raises the question: can this bound be achieved?
Our final result answers this question in the affirmative by presenting a distribution-oblivious single-threshold algorithm that is $O(\sqrt{b})$-competitive for all non-IID instances.

\begin{theorem}
\label{thm:informal-hardness-classic}
For the cost-minimization prophet inequality with general (non-IID) $b$-bounded distributions, there exists a single-threshold algorithm achieving a competitive ratio of $2\sqrt{b}$.
\end{theorem}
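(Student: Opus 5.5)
We use the single threshold $\theta=\sqrt{b}$: accept the first $X_i$ with $i\le n-1$ and $X_i\le\sqrt{b}$; if none is accepted, take $X_n$. The argument splits according to the prophet's realization $\min_i X_i$, compared with $\sqrt{b}$. The one structural fact we use is that the algorithm's output can exceed $\sqrt{b}$ only if every one of $X_1,\dots,X_{n-1}$ exceeded $\sqrt{b}$.

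\textbf{Step 1 (bounding the algorithm).} Let $E$ be the event that $X_i>\sqrt{b}$ for all $i\le n-1$. On $E^c$ the algorithm pays at most $\sqrt{b}$. On $E$ it pays $X_n$, which is at most $b$. Since $X_n$ is independent of $E$,
\[
\E[\ALG]\le \sqrt{b}\,\Pr[E^c]+\Pr[E]\,\E[X_n].
\]
A crude bound $\E[X_n]\le b$ is too weak, so we keep the term in the form $\Pr[E]\,\E[X_n]$.

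\textbf{Step 2 (lower-bounding the prophet).} Write $M=\min_i X_i$. Every realization is at least $1$, so $M\ge 1$. On $E$, we also have $M=\min\{\min_{i<n}X_i,\,X_n\}\ge \min\{\sqrt{b},X_n\}$. Hence
\[
\E[\OPT]\ge \Pr[E^c]\cdot 1+\Pr[E]\,\E[\min(\sqrt{b},X_n)],
\]
again using independence of $X_n$ from $E$.

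\textbf{Step 3 (term-by-term comparison).} The first terms compare with ratio $\sqrt{b}$. For the second terms it suffices to show $\E[X_n]\le \sqrt{b}\,\E[\min(\sqrt{b},X_n)]$. This holds pointwise:
\begin{itemize}
\item if $x\le\sqrt{b}$, then $x\le \sqrt{b}\,x$ because $\sqrt{b}\ge1$;
\item if $x>\sqrt{b}$, then $x\le b=\sqrt{b}\cdot\sqrt{b}$.
\end{itemize}
Combining the two comparisons gives $\E[\ALG]\le\sqrt{b}\,\E[\OPT]$, which is already within the claimed ratio $2\sqrt{b}$.

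\textbf{Where care is needed.} The main point to check is Step 2's use of independence. In a fixed-order non-IID model, $X_n$ is independent of $X_1,\dots,X_{n-1}$, so the step is valid. If the intended model allows adversarial or random order, we would instead condition on the identity of the last arriving variable and apply the same argument. The factor $2$ in the statement leaves slack for a variant where the last variable is not automatically independent. In that case we would bound $\E[\ALG]\le \sqrt{b}+\E[X_n\mathbf 1_E]$, bound $\E[\OPT]\ge \max\{1,\E[\min(\sqrt b,X_n)\mathbf 1_E]\}$, and add the two resulting inequalities to obtain $2\sqrt{b}$.
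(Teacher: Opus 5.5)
Your proof is correct and actually gives more than the statement asks: it shows $\ALG \le \sqrt{b}\cdot\OPT$, not just $2\sqrt{b}$. The paper uses the same threshold $\theta=\sqrt{b}$ but bounds the two sides separately. It writes $\ALG \le \theta + b\prod_{i=1}^n g_i$ and $\OPT = 1+\int_1^b\prod_i G_i(x)\,\d x \ge \max\{1,\theta\prod_i g_i\}$, and then divides each summand of $\ALG$ by that max. Summing the two resulting ratios is where the factor $2$ is lost. Had the paper kept the finer forms $\ALG\le\theta(1-P)+bP$ and $\OPT\ge(1-P)+\theta P$, with $P=\prod_i g_i$, it too would get $\max\{\theta,b/\theta\}=\sqrt{b}$.

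Your decomposition instead pairs the part of $\ALG$ on $E^c$ with the part of $\OPT$ on $E^c$, and likewise on $E$. The overall ratio is then at most the larger of two ratios, each at most $\sqrt{b}$, rather than their sum.

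Your Step 3 comparison in fact holds realization by realization, with no independence needed:
\begin{itemize}
\item On $E^c$: $\ALG\le\sqrt{b}\le\sqrt{b}\,\min_i X_i$.
\item On $E$: $X_n\le\sqrt{b}\,\min\{\sqrt{b},X_n\}\le\sqrt{b}\,\min_i X_i$.
\end{itemize}
So you can compare $\E[X_n\mathbf{1}_E]$ with $\E[\min\{\sqrt{b},X_n\}\mathbf{1}_E]$ directly, without factoring through $\Pr[E]$. The caveat in your last paragraph and the fallback $2\sqrt{b}$ variant are therefore unnecessary. The same pointwise bound covers adversarial order, random order, and even correlated values. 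The paper's route, by contrast, uses independence to express $\OPT$ as an integral of $\prod_i G_i$ and the probability of reaching the end as $\prod_i g_i$.

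Your sharper constant also narrows the gap to the paper's $\sqrt{b}/2$ lower bound to a factor of $2$.
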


This result reveals a separation between the best achievable competitive ratios in the IID and non-IID settings for cost-minimization, even when the arrival order is random.

\subsection{A Technical Overview}
As also noted by Livanos and Mehta~\cite{conf/soda/LivanosM24}, a central difficulty in upper bounding the competitive ratio in the cost-minimization setting is bridging the gap between the expressions for the prophet’s cost and that of the optimal online algorithm. Consider the case $n=2$ as an example. The expected cost of the prophet is
\begin{align*}
    \OPT = \int_{0}^{b} G(x)^2 \,\d x,
\end{align*}
where $G(x) = 1 - F(x)$ denotes the survival function.  
The optimal online algorithm accepts the first variable if its realization is at most $\E[X]$; otherwise, it rejects the first variable and accepts the second. Its expected cost is therefore
\begin{align*}
    \ALG = \E[\min\{X,\E[X]\}] = \int_{0}^{\E[X]} G(x) \,\d x.
\end{align*}

To upper bound the competitive ratio, we need to relate these two expressions.\footnote{Existing works~\cite{conf/soda/LivanosM24,conf/sigecom/LivanosM25} used the hazard rates and extreme value theory to bridge this gap.}
We observe that the Cauchy--Schwarz inequality is particularly effective for this purpose. By extending the integral $\int_{0}^{\E[X]} G(x)\,\mathrm{d}x$ as an integral over $[0,b]$ with the integrand set to zero for $x > \E[X]$ and applying the (continuous) Cauchy--Schwarz inequality, we obtain
\begin{align*}
    \int_{0}^{\E[X]} G(x)\,\d x 
    = \int_{0}^{b} G(x)\,\mathbf{1}(x \le \E[X])\,\d x
    \le \left( \int_{0}^{b} G(x)^2\,\mathrm{d}x \right)^{1/2} \cdot \E[X]^{1/2}.
\end{align*}

This bound directly connects $\ALG$ to $\OPT$, which makes it easier to bound the competitive ratio. In particular, applying Cauchy--Schwarz again on $\E[X] = \int_{0}^{b} G(x) \,\d x$, we can derive that $\E[X] \le \sqrt{b}\cdot \OPT$, which implies
\[
\frac{\ALG}{\OPT} \le \left(\frac{\E[X]}{\OPT}\right)^{1/2} = O\!\left(b^{1/4}\right).
\]

Extending the analysis from $n=2$ to general $n$ requires more powerful tools. Fortunately, Cauchy--Schwarz is a special case of Hölder’s inequality, which is sufficiently general to transform the bound on $\ALG$ into a form comparable to $\OPT$. 
For $n \ge 2$, the prophet’s expected cost is $\int_{0}^{b} G(x)^n\,\mathrm{d}x$, while the optimal online algorithm admits the recursive characterization
\[
V_{k+1} = \int_{0}^{V_k} G(x)\,\mathrm{d}x,
\quad \text{with } V_0 = b,
\]
where $V_n$ is the expected cost of the algorithm. Applying Hölder’s inequality gives
\begin{align*}
    V_{k+1} 
    \le \left(\int_{0}^{b} G(x)^n\,\mathrm{d}x \right)^{\frac{1}{n}} \cdot V_k^{\frac{n-1}{n}}.
\end{align*}
Then by an induction on $k$, we can show an upper bound of $b^{(1-1/n)^n}$ on the competitive ratio (see Section~\ref{sec:online-optimum} for the proof).

\medskip

Moving from known to unknown distributions introduces several additional challenges. A primary difficulty is that all of the above expressions depend explicitly on $G(x)$, the (unknown) survival function. Indeed, the optimal online algorithm is no longer well-defined without access to $G(x)$.
Instead, a natural distribution-oblivious approach specifies an increasing sequence of thresholds $\theta_1,\ldots,\theta_n$, depending only on $b$ and $n$, which determine whether to accept each variable.\footnote{Our distribution-oblivious algorithms belong to the class of \emph{blind strategies}~\cite{DBLP:journals/mp/CorreaSZ21}, whose thresholds are fixed without seeing any realizations. In general, one could consider adaptive algorithms that use different thresholds depending on the observed realizations.}
To optimize multi-threshold algorithms, one must identify effective threshold sequences and determine their functional form. 
Consider the case when $n \to \infty$, for example (more details can be found in Appendix~\ref{sec:continuous-analysis}).
After a standard normalization, suppose variables arrive over the interval $[0,1]$, and let $\theta(t)$ denote the threshold applied at time $t$. As discussed earlier, $\theta(t)$ should be increasing and satisfy $\theta(1) = b$. The challenge is to determine its functional form.
Interestingly, both our upper bound analysis and the construction of hard instances suggest that the optimal thresholds follow a doubly exponential form, namely $\theta(t) = b^{e^{\,t-1}}$ (see Figure~\ref{fig:thresholds} for an illustration when $b=10$). Surprisingly, we show that a discretized version of these distribution-independent thresholds achieves the same competitive ratio $b^{(1-1/n)^n}$ as the optimal online algorithm.

\begin{figure}[htbp]
    \centering
\begin{tikzpicture}
\begin{axis}[
    width=8cm,
    height=5.5cm,
    xmin=0.0, xmax=1.0,
    ymin=0, ymax=11.5,
    xtick={0,0.25,0.5,0.75,1},
    minor x tick num=1,
    ytick={2,4,6,8,10},
    axis lines=middle,
    hide obscured x ticks=false,
    xlabel={$t$},
    ylabel={$\text{threshold}$},
    samples=200,
    smooth,
    thick,
    legend style={at={(0.03,0.97)}, anchor=north west},
    grid=both,
    major grid style={line width=0.3pt, draw=gray!25},
    minor grid style={line width=0.15pt, draw=gray!15},
    minor tick length=0pt,
]

\addplot[blue, thick, domain=-0.2:1.2] {pow(10, exp(x-1))};

\addplot[red, dashed, thick, domain=-0.2:1.2] {sqrt(10)};

\end{axis}
\end{tikzpicture}
\caption{An illustrating example with $b=10$, where the blue line shows the threshold function $\theta(t)=b^{e^{t-1}}$ for the multi-threshold algorithm, and the red dashed line shows the threshold function $\theta(t) = \sqrt{b}$ for the single-threshold algorithm.}
\label{fig:thresholds}
\end{figure}

With these fixed thresholds, the online algorithm becomes well-defined, and we can derive an upper bound on its expected cost analogous to that of the optimal online algorithm. Although the survival function $G(x)$ still appears in the expressions, it serves purely as an analytical tool. 
Using a similar argument, we apply Hölder’s inequality to obtain an upper bound on $\ALG$ that closely matches the lower bound on $\OPT$. Finally, by exploiting the specific form of (the discretized version of) $\theta(t)$, we can eliminate the dependence on $G(x)$ and derive an explicit upper bound on the competitive ratio (see Section~\ref{sec:oblivious-distributions}).
By a similar analysis, we also show that setting $\theta(t) = \sqrt{b}$ yields an asymptotically optimal algorithm that is $O(\sqrt{b})$-competitive.

\subsection{Related Works}
\label{sec:related-work}

\paragraph{Cost-Minimization Prophet Inequality.} 
Livanos and Mehta~\cite{conf/soda/LivanosM24} were among the first to design competitive algorithms for the cost-minimization IID prophet inequality. They studied the class of entire distributions and provided a closed-form expression for the competitive ratio of the optimal online algorithm. They also analyzed single-threshold algorithms and showed that no constant-factor guarantee is possible; instead, the optimal competitive ratio scales as $O(\text{polylog}(n))$, with the exponent depending on the distribution.  
Subsequently, Livanos and Mehta~\cite{conf/sigecom/LivanosM25} revisited the problem using tools from extreme value theory, and obtained upper bounds that depend on the extreme value index of the distributions. Notably, their framework applies to both maximization and minimization settings. Other variants of the cost-minimization prophet inequality have also been studied, including models with hiring time~\cite{journals/mor/DisserFGGKSST20}, multi-choice~\cite{DBLP:conf/esa/AzarBF26} and fractional selection~\cite{journals/mor/QinVW24}.

\paragraph{Maximization Prophet Inequality for Known Distributions.} 
For IID random variables, Kertz~\cite{kertz1986stop} established an upper bound of $0.7451$ on the competitive ratio of the optimal online algorithm. A lower bound of $1 - 1/e$ was shown by Hill and Kertz~\cite{hill1982comparisons}, which was later improved to $0.738$ by Abolhassani et al.~\cite{stoc/AbolhassaniEEHK17}, and eventually matched by the tight $0.7451$ bound of Correa et al.~\cite{sigecom/CorreaFHOV17}.  
For non-IID variables with adversarial arrival order, Krengel and Sucheston~\cite{krengel1977semiamarts,krengel1978semiamarts} established a $1/2$-competitive bound and proved its optimality. Subsequent work has focused on relaxed models, such as the prophet secretary problem (random arrival order) and order selection (where the decision-maker can decide the arrival order). For prophet secretary, competitive algorithms have been developed in~\cite{siamdm/EsfandiariHLM17,sigecom/CorreaFHOV17,DBLP:journals/siamcomp/EhsaniHKS24,sigecom/AzarCK18,DBLP:journals/mp/CorreaSZ21,conf/soda/Harb25}, with the current best lower bound of $0.688$ due to Chen et al.~\cite{DBLP:conf/soda/Chen0LT25} and upper bound of $0.7235$ by Giambartolomei et al.~\cite{DBLP:journals/corr/abs-2304-04024}. For the order selection variant, the best known competitive ratio is $0.7258$ by Bubna and Chiplunkar~\cite{DBLP:conf/sigecom/BubnaC23}, improving upon the $0.7251$ result of Peng and Tang~\cite{DBLP:conf/focs/PengT22}. Beyond multiplicative guarantees, Hill and Kertz
\cite{hill1981additive, hill1982comparisons} derived additive comparisons for bounded independent and IID random variables. Related additive-regret objectives were recently studied by Mahdian et al.~\cite{mahdian2026additively} in the secretary
setting.

\paragraph{Maximization Prophet Inequality for Unknown Distributions.} 
When the distributions are unknown, most work focuses on algorithms with limited sample access. Correa et al.~\cite{DBLP:journals/mor/CorreaDFS22} showed that with $o(n)$ samples, the best achievable competitive ratio for IID prophet inequality is $1/e$, while $O(n^2)$ samples suffice to approach the optimal ratio $0.7451 - \varepsilon$. These bounds were later improved by Rubinstein et al.~\cite{innovations/RubinsteinWW20}, who showed that $O(n)$ samples suffice in the IID setting. They also proved that in the non-IID setting, a single sample from each distribution is enough to achieve the optimal $1/2$ ratio.  
Further progress has been made with additional samples~\cite{DBLP:conf/soda/Ezra26,DBLP:conf/stoc/CristiZ24} and under various sampling models~\cite{DBLP:journals/corr/abs-2104-02050,DBLP:conf/soda/CaramanisDFFLLP22,DBLP:conf/wine/GravinLT22,DBLP:journals/jmlr/CorreaC0S22,DBLP:conf/soda/KaplanNR22,correa2024sample,DBLP:conf/sigecom/0001LTWW024}. There is also a growing line of work studying query access to the CDF. For example, Feng et al.~\cite{DBLP:journals/ai/FengLLWW25} showed that a single query suffices to achieve a ratio of $0.686$, while Perez-Salazar et al.~\cite{DBLP:journals/corr/abs-2210-05634} showed that with two queries, one can achieve a ratio of $0.708$.

\subsection{Organization}

This paper is organized as follows. In Section~\ref{sec:preliminaries}, we introduce the basic models, definitions, and several useful lemmas. In Section~\ref{sec:online-optimum}, we analyze the optimal online algorithm, providing both upper bounds and the matching hard instance. In Section~\ref{sec:oblivious-distributions}, we present the distribution-oblivious algorithms (single- and multi-threshold) and establish their competitive ratios. Finally, in Section~\ref{sec:non-IID}, we study the non-IID setting under both adversarial and random arrival orders.

\section{Preliminaries}
\label{sec:preliminaries}

In this section, we introduce the formal model and relevant notions. We also present several useful lemmas. Since the main focus of this paper is the IID setting, we restrict our notation to this case. The notation for the non-IID setting will be introduced in Section~\ref{sec:non-IID}.

\paragraph{Model.} In the cost-minimization IID prophet inequality model, there are $n$ independent random variables $X_1,\ldots,X_n$ drawn from a common distribution $D$. At each time step, a variable $X_i$ realizes its value, and the algorithm must irrevocably decide whether to accept it as the cost or reject it. The objective is to minimize the cost of the selected realization, subject to the constraint that exactly one variable must be chosen. In this paper, we assume that each variable is supported on $[1, b]$, where $b$ is known; we refer to such distributions as $b$-bounded.

\paragraph{Notations.} Let $F(x)$ denote the cumulative distribution function (CDF) of $D$, and let $G(x) = 1 - F(x)$ denote the survival function. We use $X^* = \min\{X_1, \ldots, X_n\}$ to denote the minimum realization, and let $X$ denote a generic random variable drawn from $D$. We write $\ALG$ for the expected cost of the algorithm and $\OPT = \E[X^*]$ for the expected cost of the offline optimum. The competitive ratio of an algorithm is defined as the supremum of $\ALG/\OPT$ over all instances.

\medskip

We formally state the H\"{o}lder's inequality as follows.

\begin{lemma}[H\"{o}lder's Inequality\footnote{Here we only present a restricted version of the H\"{o}lder's Inequality adapted for our purpose.}]
    If $S \subset \mathbb{R}$ is an interval and $f$ and $g$ are non-negative function such that $f^p$ and $g^q$ are Riemann integrable on $S$, H\"{o}lder's Inequality states that
    \begin{align*}
        \int_{S}  f(x)\cdot g(x) \,\d x \leq \left(\int_S  f(x)  ^p \,\d x\right)^{\frac{1}{p}} \cdot \left( \int_S g(x) ^{q} \,\d x\right)^{\frac{1}{q}},
    \end{align*}
    where $p, q > 1$ are real-numbers satisfying $\frac{1}{p} + \frac{1}{q}=1$.
\end{lemma}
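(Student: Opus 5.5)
We will prove the inequality by the standard normalization argument built on Young's inequality. Write $A = \left(\int_S f(x)^p\,\d x\right)^{1/p}$ and $B = \left(\int_S g(x)^q\,\d x\right)^{1/q}$. First we check that the left-hand side is meaningful. Since $f^p$ and $g^q$ are Riemann integrable and nonnegative, the functions $f = (f^p)^{1/p}$ and $g = (g^q)^{1/q}$ are compositions of integrable functions with the continuous maps $t \mapsto t^{1/p}$ and $t \mapsto t^{1/q}$. Hence they are integrable on every bounded subinterval, and so is their product $fg$. If $S$ is unbounded, we interpret all integrals as improper integrals. We then prove the inequality on each compact subinterval $[\alpha,\beta]\subset S$ and pass to the limit. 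The integrands are nonnegative, so the integrals are monotone in the interval and the limit preserves the inequality.

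The analytic core is Young's inequality: for $a,c\ge 0$ and conjugate exponents $p,q>1$,
\begin{align*}
ac \le \frac{a^p}{p} + \frac{c^q}{q}.
\end{align*}
This follows from concavity of $\ln$, which is the step we would write out. For $a,c>0$,
\begin{align*}
\ln\!\left(\frac{a^p}{p}+\frac{c^q}{q}\right) \ge \frac{1}{p}\ln a^p + \frac{1}{q}\ln c^q = \ln(ac),
\end{align*}
and the case $a=0$ or $c=0$ is trivial.

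Next we dispose of degenerate cases. If $A=0$, then $\int f^p = 0$ on each compact subinterval. We then apply Young's inequality in a scaled form: for any $\lambda>0$,
\begin{align*}
fg \le \frac{\lambda^p f^p}{p} + \frac{g^q}{q\lambda^q}.
\end{align*}
Integrating gives $\int fg \le \lambda^{-q} B^q / q$. If $B<\infty$, letting $\lambda\to\infty$ shows $\int fg = 0$. The case $B=0$ is symmetric. If $A$ or $B$ is infinite and the other is positive, the right-hand side is $+\infty$ and there is nothing to prove.

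In the main case $0<A,B<\infty$, we apply Young's inequality pointwise with $a = f(x)/A$ and $c = g(x)/B$. Integrating over $S$ gives
\begin{align*}
\frac{1}{AB}\int_S f(x)g(x)\,\d x \le \frac{1}{pA^p}\int_S f^p + \frac{1}{qB^q}\int_S g^q = \frac1p+\frac1q = 1,
\end{align*}
which is exactly the claim. This last step relies only on linearity and monotonicity of the Riemann integral. The only real obstacle is the measure-theoretic bookkeeping: integrability of $fg$ and the treatment of unbounded $S$ or zero and infinite norms. We handle it by working on compact subintervals first, as above.
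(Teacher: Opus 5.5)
The paper does not prove this lemma at all. It quotes H\"older's inequality as a classical tool (the footnote only says a restricted form is stated) and applies it once, in Lemma~\ref{lemma:upper-bound-G-integral}, with $S=[0,b]$, $f=G$, $g=\mathbf{1}(s\le\theta)$, $p=n$, $q=n/(n-1)$. Your argument is the standard textbook proof: Young's inequality via concavity of $\ln$, normalization by $A$ and $B$, then integration, with integrability of $fg$ from continuous-composition and product closure of Riemann integrability. It is correct, and its only effect is to make the paper self-contained.

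Two small remarks. First, the hypothesis that $f^p$ and $g^q$ are Riemann integrable, properly or improperly, already forces $A,B<\infty$, so your infinite cases are vacuous. That matters slightly, because your case list omits the combination $A=0$, $B=\infty$; it cannot occur, and on compact subintervals everything is finite anyway. Relatedly, on an unbounded or half-open $S$ you should say ``compact'' rather than ``bounded'' subinterval in your first paragraph, since $f^p$ may be unbounded near an open endpoint; you do use compact subintervals in the actual argument. Second, for the paper's purposes none of the improper-integral bookkeeping is needed. Here $S$ is compact, $G$ is monotone (hence Riemann integrable), and $g$ is an indicator, so the only inequality actually used reduces to Jensen's inequality for $x\mapsto x^n$, namely $\frac{1}{\theta}\int_0^\theta G \le \bigl(\frac{1}{\theta}\int_0^\theta G^n\bigr)^{1/n}$, combined with $\int_0^\theta G^n \le \int_0^b G^n$.
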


Next, we introduce several lemmas as technical tools.

\begin{lemma}
\label{lem:min-x,t}
    For any $t \in [1,b]$ and random variable $X$ distributed on $[1, b]$ whose survival function is $G(x)$, we have $\E[\min\{X, t\}] = \int_{0}^{t} G(s) \,\d s$.
\end{lemma}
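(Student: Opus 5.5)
We would prove Lemma~\ref{lem:min-x,t} through the layer-cake (tail-integral) representation of the expectation of a non-negative random variable. For any non-negative random variable $Y$ we have the pointwise identity $Y = \int_0^\infty \mathbf{1}(s < Y)\,\d s$. Taking expectations and exchanging expectation and integral gives $\E[Y] = \int_0^\infty \Pr[Y > s]\,\d s$. The exchange is justified by Tonelli's theorem, since the integrand is non-negative.

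Apply this with $Y = \min\{X,t\}$, which is non-negative because $X \ge 1 > 0$. The key observation is how the event $\{\min\{X,t\} > s\}$ looks:
\begin{itemize}
\item For $s < t$, it is exactly the event $\{X > s\}$, so its probability is $G(s)$.
\item For $s \ge t$, it is empty, because $\min\{X,t\} \le t \le s$.
\end{itemize}
Substituting into the tail formula yields
\begin{align*}
\E[\min\{X,t\}] = \int_0^t G(s)\,\d s + \int_t^\infty 0\,\d s = \int_0^t G(s)\,\d s .
\end{align*}

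The only subtlety is the convention for $G$. It is defined by $G(s) = 1 - F(s) = \Pr[X > s]$, which matches the strict inequality used in the layer-cake formula. Had one written $\Pr[X \ge s]$ instead, the two would differ only at the countably many atoms of $D$, which do not affect the Lebesgue (or improper Riemann) integral. Also, on $[0,1)$ we have $G(s) = 1$, consistent with $X \ge 1$, so the lower limit $0$ is correct.

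This step is routine. The main obstacle is simply justifying the exchange of expectation and integral, which Tonelli handles.
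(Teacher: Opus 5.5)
Your proof is correct and matches the paper's: apply the tail-integral formula to $\min\{X,t\}$, note that $\{\min\{X,t\}>s\}=\{X>s\}$ for $s<t$ and is empty for $s\ge t$, and discard the zero part. The paper integrates only up to $b$ rather than $\infty$ and omits the Tonelli justification, but these differences are cosmetic.
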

\begin{proof}
    Following the definition of expectation, we have
    \begin{align*}
        \E[\min\{X, t\}] &= \int_{0}^{b} \Pr [\min\{X, t\} > s] \,\d s \\
        &= \int_{0}^{t} \Pr [\min\{X, t\} > s] \,\d s + \int_{t}^{b} \Pr [\min\{X, t\} > s] \,\d s \\
        &= \int_{0}^{t} \Pr [\min\{X, t\} > s] \,\d s
        = \int_{0}^{t} G(s) \,\d s. \qedhere
    \end{align*}
\end{proof}

\begin{lemma}
\label{lem:conditioned-expectation}
For any $\theta \in [1, b]$ and random variable $X$ distributed on $[1,b]$ whose survival functions and CDF are $G(x)$ and $F(x)$, respectively, we have
\begin{equation*}
    \int_{0}^{\theta} G(x) \,\d x = F(\theta) \cdot \E[X\mid X\leq \theta] + \theta \cdot G(\theta).
\end{equation*}
\end{lemma}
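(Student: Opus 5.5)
The identity follows by reading both sides as expressions for $\E[\min\{X,\theta\}]$. First, Lemma~\ref{lem:min-x,t} with $t=\theta$ gives
\[
\int_{0}^{\theta} G(x)\,\d x = \E[\min\{X,\theta\}].
\]
Next, I will split this expectation on the events $\{X\le\theta\}$ and $\{X>\theta\}$. These events are complementary, with probabilities $F(\theta)$ and $G(\theta)=1-F(\theta)$.

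On $\{X\le\theta\}$ we have $\min\{X,\theta\}=X$. On $\{X>\theta\}$ we have $\min\{X,\theta\}=\theta$. By the law of total expectation,
\[
\E[\min\{X,\theta\}] = F(\theta)\,\E[X\mid X\le\theta] + G(\theta)\,\theta,
\]
which is exactly the claimed right-hand side.

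The only step that needs care is a degenerate case: when $F(\theta)=0$, the conditional expectation is undefined. I will handle this by adopting the convention that the product $F(\theta)\,\E[X\mid X\le\theta]$ denotes $\E[X\cdot\mathbf{1}(X\le\theta)]$, which equals $0$ in that case. Both sides then reduce to $\theta$, because $G\equiv1$ on $[0,\theta)$.

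As an alternative check, one can also verify the identity by integration by parts. Write $\int_0^\theta G = \theta - \int_0^\theta F$, and note that $\int_0^\theta F(x)\,\d x = \theta F(\theta) - \int_0^\theta x\,\d F(x)$. This gives the same formula, but the probabilistic decomposition avoids any regularity concerns about $F$.
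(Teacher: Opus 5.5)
Your proof is correct, but it takes a slightly different route from the paper's. The paper does not go through $\E[\min\{X,\theta\}]$. It applies the tail-integral formula directly to the conditional law of $X$ given $X\le\theta$, using $\Pr[X>x\mid X\le\theta]=(G(x)-G(\theta))/F(\theta)$ for $x<\theta$ and $0$ for $x\ge\theta$. Multiplying by $F(\theta)$ then gives $\int_0^\theta G(x)\,\d x-\theta\,G(\theta)$.

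Your argument instead reuses Lemma~\ref{lem:min-x,t} together with the law of total expectation, and this makes the identity more transparent. The left side is the expected cost of capping $X$ at $\theta$. The right side is that same quantity split according to whether $X$ clears the threshold, which is exactly how the lemma is later used in the threshold-algorithm analyses.

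Your version is also more robust. The paper's computation conditions on $\{X\le\theta\}$ and divides by $F(\theta)$, so it tacitly assumes $F(\theta)>0$. That fails whenever $X$ puts no mass on $[1,\theta]$, for example at $\theta=1$ with an atomless distribution. Your convention $F(\theta)\,\E[X\mid X\le\theta]:=\E[X\,\mathbf{1}(X\le\theta)]$ covers that case, and it matches how the product appears downstream. What the paper's route buys is self-containment: it does not depend on the preceding lemma.

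Your integration-by-parts check is valid, since $F$ is monotone and right-continuous, but it is not needed.
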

\begin{proof}
Following the definition of expectation, we have
\begin{align*}
    F(\theta)\cdot \E[X\mid X\leq \theta] &= F(\theta) \cdot \int_{0}^{b} \Pr[X>x\mid X\leq \theta] \,\d x \\
    &= F(\theta) \cdot \left(\int_{0}^{\theta} \Pr[X>x \mid X\leq \theta]\,\d x + \int_{\theta}^{b}0\ \,\d x \right)\\
    &= F(\theta) \cdot \int_{0}^{\theta} \frac{G(x) - G(\theta)}{F(\theta)}\,\d x
    = \int_{0}^{\theta} G(x) \,\d x - \theta\cdot G(\theta).
\end{align*}    
Rearranging the equality completes the proof.
\end{proof}

\begin{lemma} \label{lemma:upper-bound-G-by-OPT}
    For any $\theta \in [1, b]$ and random variable $X$ distributed on $[1,b]$ whose survival functions is $G(x)$, we have $G(\theta) \leq \left( {\OPT}/{\theta} \right)^{1/n}$.
\end{lemma}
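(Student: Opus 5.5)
We want to show $G(\theta)^n \le \OPT/\theta$, i.e.\ $\theta \cdot G(\theta)^n \le \OPT$. The plan is to lower bound $\OPT$ by a truncated version of its integral representation, exactly as in the proof of Lemma~\ref{lem:min-x,t}.

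First, write $\OPT = \E[X^*]$ using the tail formula. Since $X^* \ge 0$ and the variables are independent with common survival function $G$, we have $\Pr[X^* > s] = G(s)^n$, so
\begin{align*}
    \OPT = \int_0^b \Pr[X^* > s]\,\d s = \int_0^b G(s)^n \,\d s .
\end{align*}

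Second, discard the part of the integral over $(\theta, b]$, which is nonnegative. Then use the monotonicity of $G$: for $s \le \theta$, $G(s) \ge G(\theta)$, and hence $G(s)^n \ge G(\theta)^n$. This yields
\begin{align*}
    \OPT \ge \int_0^\theta G(s)^n \,\d s \ge \theta \cdot G(\theta)^n .
\end{align*}
Rearranging and taking $n$-th roots, which is valid because everything is nonnegative and $\theta \ge 1 > 0$, gives $G(\theta) \le (\OPT/\theta)^{1/n}$.

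There is no real obstacle here. The only point to check is the identity $\Pr[X^*>s]=G(s)^n$, which follows because $X^* > s$ holds iff every $X_i > s$, together with independence. Equivalently, one could argue directly via Markov's inequality: $\Pr[X^* \ge \theta] \le \OPT/\theta$, and $\Pr[X^*\ge\theta] \ge \Pr[X^*>\theta] = G(\theta)^n$.
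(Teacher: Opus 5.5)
Your proof is correct and matches the paper's argument: write $\OPT=\int_0^b G(s)^n\,\d s$, truncate to $[0,\theta]$, use that $G$ is non-increasing to get $\OPT \ge \theta\cdot G(\theta)^n$, and rearrange. Your Markov-inequality remark ($G(\theta)^n = \Pr[X^*>\theta] \le \OPT/\theta$) is a valid one-line restatement of the same bound.
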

\begin{proof}
    By definition of $\OPT$, we have
    \begin{equation*}
        \OPT = \int_{0}^{b} G(x)^n \,\d x \geq \int_{0}^{\theta} G(x)^n \,\d x \geq \theta\cdot G(\theta)^n,
    \end{equation*}
    where the last inequality holds since $G(\cdot)$ is non-increasing.
    Rearrangement yields the lemma.
\end{proof}

\begin{lemma} \label{lemma:upper-bound-G-integral}
    For any $\theta \in [1, b]$ and random variable $X$ distributed on $[1,b]$ whose survival functions is $G(x)$, we have $\int_{0}^{\theta} G(s) \,\d s \leq \OPT^{\frac{1}{n}} \cdot \theta^{\frac{n-1}{n}}$.
\end{lemma}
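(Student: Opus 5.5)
The plan is to apply H\"{o}lder's inequality on the interval $S=[0,\theta]$ with exponents $p=n$ and $q=\frac{n}{n-1}$, taking $f(x)=G(x)$ and $g(x)\equiv 1$. These satisfy $\frac1p+\frac1q=1$, and they are admissible because $n\ge 2$ gives $p,q>1$. The functions are non-negative. $G$ is monotone, so $G^n$ is Riemann integrable on any bounded interval, and the constant function is trivially integrable.

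The main step is the chain
\begin{align*}
\int_0^\theta G(s)\,\d s \le \left(\int_0^\theta G(s)^n\,\d s\right)^{\frac1n}\cdot\left(\int_0^\theta 1\,\d s\right)^{\frac{n-1}{n}} = \left(\int_0^\theta G(s)^n\,\d s\right)^{\frac1n}\cdot \theta^{\frac{n-1}{n}}.
\end{align*}

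It then remains to compare $\int_0^\theta G(s)^n\,\d s$ with $\OPT$. We use $\OPT=\E[\min_i X_i]=\int_0^b \Pr[X^*>s]\,\d s=\int_0^b G(s)^n\,\d s$, which holds by independence of the $X_i$, the same identity used in the proof of Lemma~\ref{lemma:upper-bound-G-by-OPT}. Since the integrand is non-negative and $\theta\le b$, we get $\int_0^\theta G(s)^n\,\d s\le \OPT$. Because $x\mapsto x^{1/n}$ is monotone, substituting this bound yields the claim.

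None of the steps is a real obstacle. The only point needing care is the integrability hypothesis of the restricted H\"{o}lder statement, and that is settled by the monotonicity of $G$ noted above.
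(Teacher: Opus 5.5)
Your proof is correct and is essentially the paper's argument. The paper also applies H\"{o}lder's inequality with exponents $n$ and $\frac{n}{n-1}$, but on $[0,b]$ to $G$ and the indicator $\mathbf{1}(s\le\theta)$, so that $\int_0^b G(s)^n\,\d s=\OPT$ appears directly; you instead restrict to $[0,\theta]$ first and then use $\int_0^\theta G(s)^n\,\d s\le\OPT$, which is a purely cosmetic difference.
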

\begin{proof}
    By H\"{o}lder's Inequality and the definition of $\OPT$, we have
    \begin{align*}
    \int_{0}^{\theta} G(s) \,\d s &= \int_{0}^{b} G(s) \cdot \mathbf{1} (s \leq \theta) \,\d s \\
    &\leq \left(\int_{0}^{b} G(s)^n\right)^{\frac{1}{n}} \cdot\left( \int_{0}^{b} \left(\mathbf{1}(s\leq \theta)\right)^{\frac{n}{n-1}} \,\d s\right)^{\frac{n-1}{n}} \\
    &= \OPT^{\frac{1}{n}} \cdot \theta^{\frac{n-1}{n}}. \qedhere
    \end{align*}
\end{proof}

\section{Optimal Online Algorithm with Known Distribution}
\label{sec:online-optimum}

For known distributions, the optimal online algorithm can be characterized via a dynamic programming approach. For instance, since the last variable $X_n$ must be accepted if no earlier variable has been selected, the optimal strategy for $X_{n-1}$ is to accept it whenever its realization is at most $\E[X]$. Likewise, the threshold for accepting $X_{n-2}$ is given by the expected cost of the optimal policy applied to the remaining two variables.
Based on this recursive structure, we can precisely describe the behavior of the optimal online algorithm as follows.

\begin{tcolorbox}[title=Online Optimal Algorithm, label=algorithm:online-optimum]
    \textbf{Set $V_0 = b$. Compute thresholds $V_1,V_2,\ldots,V_{n-1}$ as}
    \begin{itemize}
        \item
        for $i = 0,\ldots, n-2$: let $V_{i+1} = \E[\min \{V_i, X\}]$.
    \end{itemize}
    \textbf{Upon the arrival of $X_i$, where $i=1,\ldots,n$:}
    \begin{itemize}
        \item if $X_i \leq V_{n-i}$, accept $X_i$ and terminate.
    \end{itemize}
\end{tcolorbox}

\smallskip

In the following, we give an upper bound for the competitive ratio of the optimal online algorithm and provide a hard instance that nearly matches the upper bound.

\subsection{Upper Bound on the Competitive Ratio}
\label{ssec:ub-online-optimal}

We first characterize the cost of the optimal online algorithm and show the upper bound.

\begin{theorem}
    \label{thm:online-optimum}
    For cost-minimization IID prophet inequality with a known $b$-bounded distribution, the optimal online algorithm achieves a competitive ratio of $b^{(1-1/n)^n}$.
\end{theorem}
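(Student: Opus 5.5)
Let $V_0=b$ and $V_{k+1}=\E[\min\{V_k,X\}]$ as in the algorithm. First I will argue that the expected cost of the online optimum is $\ALG=V_n$, by backward induction on the number of remaining variables. With one variable left the cost is $\E[X]=\E[\min\{b,X\}]=V_1$. With $k+1$ variables left, the algorithm accepts $X$ if $X\le V_k$ and otherwise continues with expected cost $V_k$, so its expected cost is $\E[\min\{X,V_k\}]=V_{k+1}$. By Lemma~\ref{lem:min-x,t}, this gives $V_{k+1}=\int_0^{V_k}G(s)\,\d s$.

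The main step is to apply Lemma~\ref{lemma:upper-bound-G-integral} with $\theta=V_k\in[1,b]$. Each $V_k$ is an expectation of a variable supported on $[1,b]$, so $V_k\in[1,b]$ and the lemma applies. It yields
\begin{align*}
V_{k+1}\le \OPT^{1/n}\, V_k^{(n-1)/n}.
\end{align*}
Set $r=(n-1)/n$. I will prove by induction on $k$ that
\begin{align*}
V_k\le \OPT^{1-r^k}\, b^{r^k}.
\end{align*}
The base case $k=0$ reads $V_0=b$, which holds. For the inductive step,
\begin{align*}
V_{k+1}\le \OPT^{1/n}\bigl(\OPT^{1-r^k}b^{r^k}\bigr)^r=\OPT^{1/n+r-r^{k+1}}\,b^{r^{k+1}},
\end{align*}
and the exponent of $\OPT$ equals $1-r^{k+1}$ because $1/n+r=1$.

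Taking $k=n$ gives $\ALG=V_n\le \OPT^{1-r^n}b^{r^n}$, so
\begin{align*}
\frac{\ALG}{\OPT}\le\left(\frac{b}{\OPT}\right)^{r^n}\le b^{(1-1/n)^n},
\end{align*}
where the last inequality uses $\OPT\ge 1$ for distributions supported on $[1,b]$.

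There is no real obstacle here; the only care needed is the bookkeeping of the exponents. One check is that $\OPT=\int_0^b G^n$ is the correct prophet expression, since $\Pr[X^*>x]=G(x)^n$, and this is the form Lemma~\ref{lemma:upper-bound-G-integral} uses. The other is that Hölder's inequality is applied on the whole interval $[0,b]$ with the indicator of $[0,V_k]$, which is exactly how that lemma is stated.
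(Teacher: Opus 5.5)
Your proposal is correct and follows the paper's proof essentially step for step. Like the paper, you use the recursion $V_{k+1}=\int_0^{V_k}G$, bound each step with Lemma~\ref{lemma:upper-bound-G-integral}, and prove $V_k\le \OPT^{1-r^k}b^{r^k}$ by induction. You also make explicit two points the paper leaves implicit: that $V_k\in[1,b]$, so the lemma applies, and that the final step uses $\OPT\ge 1$.
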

\begin{proof}
We proceed in three steps.
First, we derive an explicit expression for the offline optimum $\OPT=\E[X^*]=\E[\min\{X_1, \ldots, X_n\}]$.
Second, leveraging the recursive form of the thresholds, we obtain an upper bound on the expected cost of the optimal online algorithm using Hölder’s inequality.
Third, we use mathematical induction to bound the competitive ratio.

For the expected cost of the prophet, we have
\begin{align*}
    \OPT = \int_{0}^{b} \Pr[X^* > x] \,\d x = \int_{0}^{b} G(x)^n\,\d x.
\end{align*}

For the expected cost of the optimal online algorithm, recall that we use $V_k$ to denote the online optimum with $k$ variables (we set $V_0 = b$ for notational convenience), and we have $\ALG = V_n$.
By the recursive expression of $V_{k}$, we obtain
\begin{align}
\label{eq:alg}
    V_{k} = \E[\min\{V_{k-1}, X\}] = \int_{0}^{V_{k-1}} G(x) \,\d x \leq \OPT^{\frac{1}{n}} \cdot V_{k-1}^{\frac{n-1}{n}},
\end{align}
where the second equality holds by Lemma~\ref{lem:min-x,t} and the inequality holds by Lemma~\ref{lemma:upper-bound-G-integral}.
Now we have a recursion between $V_k$ and $\OPT$. 
In the following, we use mathematical induction to prove that for all $k\geq 0$, we have $V_k \leq \OPT^{1-(\frac{n-1}{n})^{k}} \cdot b^{(\frac{n-1}{n})^{k}}$.
\begin{itemize}
    \item 
    Base case: for $k=0$, we have $V_0=b \leq \OPT^0 \cdot b^1$.
    \item 
    Assume that $V_{k-1} \leq \OPT^{1-(\frac{n-1}{n})^{k-1}} \cdot b^{(\frac{n-1}{n})^{k-1}}$ holds. By Inequality~\eqref{eq:alg}, we have
    \begin{align*}
        V_{k} &\leq \OPT^{\frac{1}{n}} \cdot V_{k-1}^{\frac{n-1}{n}} \\
        &\leq \OPT^{\frac{1}{n}} \cdot \left(\OPT^{1-(\frac{n-1}{n})^{k-1}} \cdot b^{(\frac{n-1}{n})^{k-1}}\right)^{\frac{n-1}{n}} \\
        &= \OPT^{1-(\frac{n-1}{n})^k} \cdot b^{(\frac{n-1}{n})^k},
    \end{align*}
    which completes the induction.
\end{itemize}

This yields an upper bound on the competitive ratio:
\begin{equation*}
   \frac{\ALG}{\OPT} =\frac{V_n}{ \OPT} \leq \frac{\OPT^{1-(\frac{n-1}{n})^{n}} \cdot b^{(\frac{n-1}{n})^{n}}}{\OPT} \leq b^{(1-\frac{1}{n})^n}.\qedhere
\end{equation*}
\end{proof}

\subsection{Lower Bound on the Competitive Ratio}
\label{ssec:lb-online-optimal}

Next, we show that our analysis for the optimal online algorithm is nearly tight by providing a hard instance for which the competitive ratio of the optimal online algorithm is at least $\frac{b^{(1-1/n)^n}}{1+\ln b}$.

\begin{theorem}
    \label{thm:hardness-result-n=infty}
    For cost-minimization IID prophet inequality with a known $b$-bounded distribution, the competitive ratio of the optimal online algorithm is at least $\frac{b^{(1-1/n)^n}}{1+\ln b}$.
\end{theorem}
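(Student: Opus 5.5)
The plan is to build a single $b$-bounded distribution on which the Hölder step in Lemma~\ref{lemma:upper-bound-G-integral} is as close to tight as possible at every level of the recursion. In particular, I want $G(x)^n$ to behave like $1/x$ (this generalizes the $F(x)=1-1/x$ instance of Livanos and Mehta, which is the case $n=2$). Concretely, I take the survival function
\[
G(x)=\begin{cases}1 & x<1,\\ x^{-1/n} & 1\le x<b,\\ 0 & x\ge b,\end{cases}
\]
so that $X$ has a continuous part on $[1,b)$ and an atom of mass $b^{-1/n}$ at $b$. Since $G$ is non-increasing, right-continuous, equals $1$ below $1$ and vanishes at $b$, this is a valid distribution supported on $[1,b]$.

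\textbf{Bounding $\OPT$.} Using $\OPT=\int_0^b G(x)^n\,\d x$, as in the proof of Theorem~\ref{thm:online-optimum}, I get
\[
\OPT=\int_0^1 1\,\d x+\int_1^b x^{-1}\,\d x=1+\ln b .
\]

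\textbf{Lower-bounding the online algorithm.} I use the recursion $V_k=\int_0^{V_{k-1}}G(x)\,\d x$ with $V_0=b$, which holds by Lemma~\ref{lem:min-x,t}. Each $V_k$ is an expectation of a quantity that is at least $1$ and at most $V_{k-1}$, so all $V_k\in[1,b]$ and only the region where $G(x)=x^{-1/n}$ matters. For $y\in[1,b]$,
\[
\int_0^y G(x)\,\d x = 1+\tfrac{n}{n-1}\bigl(y^{(n-1)/n}-1\bigr)\ \ge\ y^{(n-1)/n},
\]
because $y^{(n-1)/n}-1\ge 0$ and $\tfrac{n}{n-1}>1$. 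Hence $V_k\ge V_{k-1}^{(n-1)/n}$. By induction, mirroring the one in Theorem~\ref{thm:online-optimum} but in the reverse direction, $V_k\ge b^{((n-1)/n)^k}$, and therefore $\ALG=V_n\ge b^{(1-1/n)^n}$.

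\textbf{Conclusion.} Dividing gives $\ALG/\OPT\ge b^{(1-1/n)^n}/(1+\ln b)$. This is valid because the optimal online algorithm's cost on this instance is exactly $V_n$ by the dynamic program.

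The main obstacle is choosing the distribution so that Hölder is tight simultaneously for every threshold $V_{k-1}$. The choice $G^n\propto 1/x$ does this up to the additive constant coming from $[0,1]$. That constant is what costs the $1+\ln b$ factor in $\OPT$, and it is also why the atom at $b$ is needed to keep the support bounded. Once this distribution is fixed, the remaining steps are routine integral and induction checks.
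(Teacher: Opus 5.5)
Your proposal is correct and follows essentially the same route as the paper: the same survival function $G(x)=x^{-1/n}$ on $[1,b)$, the same computation $\OPT=1+\ln b$, and the same induction $V_k\ge b^{(1-1/n)^k}$ via the closed-form recursion $V_k=1+\frac{n}{n-1}\bigl(V_{k-1}^{(n-1)/n}-1\bigr)$. The only cosmetic difference is that you first isolate the one-step inequality $V_k\ge V_{k-1}^{(n-1)/n}$, and you explicitly justify $V_k\ge 1$ and the atom at $b$; the paper instead folds that step directly into the inductive calculation.
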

\begin{proof}
Consider a distribution whose survival function is 
\begin{align*}
    G(x) = 
    \begin{cases}
        x^{-\frac{1}{n}} \quad & \text{when } 1\leq x < b,\\
        0 & \text{when } x\geq b.
    \end{cases}
\end{align*}

This hard instance is inspired by that of \cite{conf/soda/LivanosM24} (whose CDF is $F(x) = 1-1/x$, as mentioned in the introduction).
By the definition of $\OPT$, we have
\begin{align*}
    \OPT &= \int_{0}^{b} G(x)^n \,\d x 
    = 1 + \int_{1}^{b} x^{-1} \,\d x
    = 1+ \ln b.
\end{align*}

We still use $V_k$ to denote the expected cost of the optimal online algorithm when $k$ variables remain, and we use dynamic programming to obtain its lower bound. With the explicit form of the survival function $G(x)$, the recursion can be written in a closed form. 
For ease of notation, we let $\beta = 1-1/n$.
Then we have
\begin{equation}
    V_k =1 + \int_{1}^{V_{k-1}} x^{-\frac{1}{n}} \,\d x = 1 + \frac{1}{1-\frac{1}{n}} \cdot \left( V_{k-1}^{1-\frac{1}{n}}-1 \right) 
    = \frac{1}{\beta}\cdot V_{k-1}^{\beta} + \left( 1-\frac{1}{\beta} \right).
    \label{equation:recursion_V_k_lb}
\end{equation}

Using a similar approach as in the previous subsection, in the following, we prove that $V_k \geq b^{\beta^k}$ for all $k\geq 0$ by induction.
\begin{itemize}
    \item 
    Base case: By definition, $V_0 = b$ satisfies $V_0 \geq b^{\beta^0} = b$.
    \item 
    Assume that $V_{k-1} \geq b^{\beta^{k-1}}$ holds. For $V_k$, we have
    \begin{align*}
        V_k &= \frac{1}{\beta}\cdot V_{k-1}^{\beta} - (\frac{1}{\beta}-1)
        \geq \frac{1}{\beta}\cdot (b^{\beta^{k-1}})^{\beta} - (\frac{1}{\beta}-1) \\
        &= \frac{1}{\beta}\cdot b^{\beta^k}- (\frac{1}{\beta}-1) - b^{\beta^k}+b^{\beta^k} 
        =(\frac{1}{\beta}-1)\cdot (b^{\beta^k}-1) + b^{\beta^k}
        \geq b^{\beta^k},
    \end{align*}
    where the last inequality holds by $b\geq 1$ and $\beta < 1$.
\end{itemize}

Now we have $V_n \geq b^{\beta^n}$, which implies a lower bound on the competitive ratio:s
\begin{equation*}
    \frac{\ALG}{\OPT} \geq \frac{b^{(1-\frac{1}{n})^n}}{1+\ln b}. \qedhere
\end{equation*}
\end{proof}

\paragraph{Remark.}
The above closed-form lower bound is obtained via some lossy lower bounds on $\ALG$. Consequently, for small values of $b$, e.g., $b=n=2$, the lower bound $\frac{b^{(1-\frac{1}{n})^n}}{1+\ln b}$ could be smaller than $1$.
For specific values of $b$ and $n$, we can use the recursion in Equation~\eqref{equation:recursion_V_k_lb} to give a lossless characterization of $\ALG$.
Unfortunately, such recursion does not admit a closed-form expression.

\section{Distribution-Oblivious Online Algorithms}
\label{sec:oblivious-distributions}

In this section, we consider the case where the algorithm knows $b$ and $n$, but does not know the exact CDF of the $b$-bounded distribution. 
We study both single-threshold and multi-threshold algorithms, and derive near-optimal competitive ratios.

\subsection{Single-Threshold Algorithm}
\label{sec:single-threshold}

As introduced, a distribution-oblivious single-threshold algorithm fixes a threshold $\theta$ that depends only on $b$ and $n$, and accepts the first variable whose realization is below $\theta$. If none of the first $n-1$ variables are accepted, the last variable is accepted regardless of its realization.
Single-threshold algorithms form a practical class of strategies in many real-world applications. For instance, in the air ticket booking problem, a natural approach is to purchase a ticket whenever its price falls below a predetermined threshold. Hence, it is of practical interest to determine how to set such a threshold optimally and what competitive ratio it guarantees.

Interestingly, we show that the optimal choice is $\theta = \sqrt{b}$, which is independent of $n$. The resulting single-threshold algorithm achieves a competitive ratio of $O(b^{\frac{n-1}{2n}})$. Moreover, we prove that no single-threshold algorithm can attain a ratio of $o(b^{\frac{n-1}{2n}})$, even when the distribution is known, indicating that our bound is asymptotically optimal.

Notably, when $n = 2$, the algorithm achieves a $b^{1/4}$-competitive ratio, which matches that of the optimal online algorithm for known distributions. This observation suggests that distribution-oblivious algorithms, with appropriately chosen thresholds, can be as competitive as distribution-aware algorithms. 
On the other hand, the hardness result of $\Omega(b^{\frac{n-1}{2n}})$ implies that single-threshold strategies are sub-optimal for large $n$.
This insight further motivates our study of distribution-oblivious algorithms with multiple thresholds.

We summarize the algorithm and its competitive guarantees as follows.

\begin{tcolorbox}[title=Single-threshold Algorithm, label=algorithm:single-threshold]
    \textbf{Set threshold $\theta = \sqrt{b}$.}
    \textbf{Upon the arrival of $X_i$, where $i=1,\ldots,n$:}
    \begin{itemize}
        \item
        for $i\leq n-1$: if $X_i \leq \theta$, accept $X_i$ and terminate;
        \item
        for $i=n$: accept $X_n$ and terminate.
    \end{itemize}
\end{tcolorbox}

\begin{theorem}
\label{thm:ub-single-threshold}
    For cost-minimization IID prophet inequality with an unknown $b$-bounded distribution, there exists a single-threshold algorithm that achieves a competitive ratio of $2\cdot b^{\frac{n-1}{2n}} - b^{\frac{1}{2n}}$.
\end{theorem}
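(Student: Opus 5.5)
Fix $\theta=\sqrt b$. I will write $q=G(\theta)$ for the probability that a single variable exceeds the threshold, and $p=F(\theta)=1-q$. The plan has two parts: first write $\ALG$ exactly in terms of $q$ and two integrals of $G$, then bound each piece against $\OPT$ using Lemmas~\ref{lemma:upper-bound-G-by-OPT} and~\ref{lemma:upper-bound-G-integral}. This leaves a one-variable optimization in $q$, with $\OPT$ as a parameter.

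\textbf{Exact cost.} For $i\le n-1$, variable $X_i$ is reached with probability $q^{i-1}$. Once reached, it contributes $p\cdot\E[X\mid X\le\theta]$. The last variable is reached with probability $q^{n-1}$ and contributes $\E[X]$. By Lemma~\ref{lem:conditioned-expectation}, $p\,\E[X\mid X\le \theta]=\int_0^\theta G(x)\,\d x-\theta q$. Write $I=\int_0^\theta G$, and split $\E[X]=I+\int_\theta^b G\le I+q(b-\theta)$, using that $G\le q$ on $[\theta,b]$. Collecting terms gives
\begin{align*}
\ALG \le I\sum_{i=0}^{n-1}q^i-\theta\sum_{i=1}^{n-1}q^i+q^{n}(b-\theta).
\end{align*}

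\textbf{Bounding against $\OPT$.} Lemma~\ref{lemma:upper-bound-G-integral} gives $I\le \OPT^{1/n}\theta^{(n-1)/n}$. Lemma~\ref{lemma:upper-bound-G-by-OPT} gives $q\le(\OPT/\theta)^{1/n}$, hence $q^n(b-\theta)\le \OPT\,(b-\theta)/\theta=\OPT(\sqrt b-1)$. I will also use the trivial bounds $I\le\theta$ and $\OPT\ge1$.

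To make the $-\theta q^i$ terms usable, I will pair each $-\theta q^i$ with the corresponding $I q^i$. This uses that $I-\theta q=pE[X\mid X\le\theta]\ge0$ and $I\ge \theta q$. Dividing by $\OPT$, the ratio becomes a function of $q$ and $\OPT$ alone. I will then argue that it is maximized at the boundary $q=(\OPT/\theta)^{1/n}$. Setting $r=(\OPT/\theta)^{1/n}$, the bound is a polynomial in $r$ with coefficients in powers of $\theta$. The target value $2b^{(n-1)/(2n)}-b^{1/(2n)}$ can be written as $2\theta^{(n-1)/n}-\theta^{1/n}$, which suggests that the extremum sits at $\OPT=1$, i.e.\ $r=\theta^{-1/n}$. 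There the first terms give roughly $\theta^{(n-1)/n}(1+\dots)$ and the final term adds $\theta-1$ scaled by $r^n$.

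\textbf{Main obstacle.} I expect the hardest step to be this final optimization, not the estimates. Naively combining the bounds loses too much: a crude bound of $\theta$ on the first part already gives $\sqrt b$, which exceeds the target. So the negative terms $-\theta q^i$ must be kept, and the $q^i$ weights must be balanced against the Hölder bound on $I$.

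My first attempt will be the following. Treat the bound as a function of $q$ with $I$ replaced by $\min\{\theta,\OPT^{1/n}\theta^{(n-1)/n}\}$. Check that the bound is monotone in $q$ up to $(\OPT/\theta)^{1/n}$, and monotone in $\OPT$ once divided by $\OPT$. Then evaluate at the corner $\OPT=1$, $q=\theta^{-1/n}$, and verify that the telescoping geometric sums collapse to $2\theta^{(n-1)/n}-\theta^{1/n}$. If monotonicity in $\OPT$ fails, I will instead split into the two cases $\OPT^{1/n}\theta^{(n-1)/n}\le\theta$ and $\OPT^{1/n}\theta^{(n-1)/n}>\theta$, and bound each case separately.
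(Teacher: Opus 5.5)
There is a genuine gap. It lies in how you handle the last round, not in the final optimization. After writing $\E[X]\le I+q(b-\theta)$, your only control on the tail term is Lemma~\ref{lemma:upper-bound-G-by-OPT}, which gives $q^n(b-\theta)\le\OPT(\sqrt b-1)$. That term alone contributes $\sqrt b-1$ to the ratio.

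Your relaxation also contains the very corner you plan to evaluate. Its constraints are $q\le(\OPT/\theta)^{1/n}$, $\theta q\le I\le\min\{\theta,\OPT^{1/n}\theta^{(n-1)/n}\}$ and $\OPT\ge1$, and the point $\OPT=1$, $q=\theta^{-1/n}$, $I=\theta^{(n-1)/n}$ satisfies all of them. There your bound equals
\[
\sum_{j=0}^{n-1}\theta^{j/n}-\sum_{j=1}^{n-1}\theta^{j/n}+\theta^{-1}(b-\theta)=1+(\sqrt b-1)=\sqrt b,
\]
not $2\theta^{(n-1)/n}-\theta^{1/n}$. The target $2b^{\frac{n-1}{2n}}-b^{\frac{1}{2n}}$ is strictly below $\sqrt b$ for $n\in\{2,3\}$; for $n=2$ it is $b^{1/4}$. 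It is also below $\sqrt b$ for every $n$ once $b\ge 4^n$. So no monotonicity argument or case split over $(q,I,\OPT)$ can recover the theorem from your relaxed expression.

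The loss is structural. The estimate $\int_\theta^b G\le q(b-\theta)$ is tight only when $G\equiv q$ on $[\theta,b)$, and then $\OPT\ge\int_0^b G^n\ge bq^n$. Lemma~\ref{lemma:upper-bound-G-by-OPT}, by contrast, uses only the mass of $G^n$ on $[0,\theta]$ and yields $q^n\le\OPT/\theta$. Your relaxation lets both be tight at once.

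The missing ingredient is to bound $\E[X]$ itself by H\"older, $\E[X]=\int_0^b G\le\OPT^{1/n}b^{(n-1)/n}$ (Lemma~\ref{lemma:upper-bound-G-integral} with $\theta=b$). You should also keep this term weighted by $q^{n-1}$ rather than peeling off an extra factor of $q$. The paper proceeds as follows.

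It regroups the exact cost as $\ALG=I+\sum_{i=2}^{n-1}q^{i-1}(I-\theta)+q^{n-1}(\E[X]-\theta)$. It then drops the middle sum using $I\le\theta$. That is the fact your pairing actually needs; $I\ge\theta q$ is not what makes those terms droppable.

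Next, $q^{n-1}(\E[X]-\theta)\le(\OPT/\theta)^{(n-1)/n}\bigl(\OPT^{1/n}b^{(n-1)/n}-\theta\bigr)$. This is valid because the second factor is nonnegative. Combined with $I\le\OPT^{1/n}\theta^{(n-1)/n}$, it gives $\ALG\le\OPT(b/\theta)^{(n-1)/n}+\OPT^{1/n}\theta^{(n-1)/n}-\OPT^{(n-1)/n}\theta^{1/n}$.

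Using $\OPT\ge1$, the last two terms are at most $\OPT(\theta^{(n-1)/n}-\theta^{1/n})$. Setting $\theta=\sqrt b$ then gives the stated ratio directly, with no optimization over $q$ required.
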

\begin{proof}
Given that $\theta$ is used to decide the acceptance of the first $n-1$ variables, if the algorithm accepts a variable before the last round, its expected cost is $\E[X\mid X\leq\theta]$; otherwise, its expected cost is $\E[X]$. Thus, we have
\begin{align*}
    \ALG &= \sum_{i=1}^{n-1} F(\theta) \cdot G(\theta)^{i-1}\cdot \E[X\mid X\leq\theta] + G(\theta)^{n-1}\cdot \E[X] \\
    &= \left(\sum_{i=1}^{n-1} G(\theta)^{i-1}\right) \cdot \left(\int_{0}^{\theta}G(x) \,\d x - \theta \cdot G(x) \right) + G(\theta)^{n-1} \cdot \E[X] \\
    &= \left(\int_{0}^{\theta} G(x) \,\d x \cdot \sum_{i=1}^{n-1} G(\theta)^{i-1}\right) - \left(\theta \cdot \sum_{i=1}^{n-1}G(\theta)^i\right) + G(\theta)^{n-1} \cdot \E[X] \\
    &=\int_{0}^{\theta} G(x) \,\d x + \left(\left(\int_{0}^{\theta} G(x) \,\d x - \theta\right) \cdot \sum_{i=2}^{n-1} G(\theta)^{i-1}\right) - \theta \cdot G(\theta)^{n-1} + G(\theta)^{n-1} \cdot \E[X] \\
    &\leq \int_{0}^{\theta} G(x) \,\d x  - \theta \cdot G(\theta)^{n-1} + G(\theta)^{n-1} \cdot \E[X] \\
    &= \int_{0}^{\theta} G(x) \,\d x + G(\theta)^{n-1} \cdot \left(\int_{0}^{b} G(x) \,\d x - \theta \right),
\end{align*}
where the second equality holds by Lemma~\ref{lem:conditioned-expectation} and the first inequality holds by $\int_{0}^{\theta} G(x) \,\d x \leq \theta$.

Our next step is to get rid of the dependence on $G(x)$ in the expression and relate the upper bound to $\OPT$.
By applying Hölder’s inequality using Lemma~\ref{lemma:upper-bound-G-by-OPT}, we have
\begin{align*}
    \ALG &\leq \int_{0}^{\theta} G(x) \,\d x + G(\theta)^{n-1} \cdot \left(\int_{0}^{b} G(x) \,\d x - \theta \right) \\
    &\leq \OPT^{\frac{1}{n}}\cdot \theta^{\frac{n-1}{n}} + G(\theta)^{n-1} \cdot \left( \OPT^{\frac{1}{n}}\cdot b^{\frac{n-1}{n}} - \theta \right) \\
    &\leq \OPT^{\frac{1}{n}}\cdot \theta^{\frac{n-1}{n}} + \left( \frac{\OPT^{\frac{1}{n}}}{\theta^{\frac{1}{n}}} \right)^{n-1} \cdot \left( \OPT^{\frac{1}{n}}\cdot b^{\frac{n-1}{n}} - \theta \right) \\
    & = \OPT\cdot \left(\frac{b}{\theta}\right)^{\frac{n-1}{n}} + \left( \OPT^{\frac{1}{n}}\cdot \theta^{\frac{n-1}{n}} - \OPT^{\frac{n-1}{n}}\cdot \theta^{\frac{1}{n}} \right),
\end{align*}
where the second inequality holds by Lemma~\ref{lemma:upper-bound-G-integral} and the third inequality holds by Lemma~\ref{lemma:upper-bound-G-by-OPT} (note that $\OPT^{\frac{1}{n}}\cdot b^{\frac{n-1}{n}} \geq b^{\frac{n-1}{n}} \geq \sqrt{b}$ for $n\geq 2$).
Furthermore, notice that
\begin{equation*}
    \OPT^{\frac{1}{n}}\cdot \theta^{\frac{n-1}{n}} - \OPT^{\frac{n-1}{n}}\cdot \theta^{\frac{1}{n}}
    \leq \OPT^{\frac{1}{n}}\cdot \theta^{\frac{n-1}{n}} - \OPT^{\frac{1}{n}}\cdot \theta^{\frac{1}{n}}
    \leq \OPT\cdot \left(\theta^{\frac{n-1}{n}} - \theta^{\frac{1}{n}}\right).
\end{equation*}

Plugging in this upper bound and substituting $\theta = \sqrt{b}$, we obtain
\begin{align*}
    \ALG \leq \OPT \cdot \left(\theta^{\frac{n-1}{n}} + b^{\frac{n-1}{n}}\cdot \theta^{-\frac{n-1}{n}} - \theta^{\frac{1}{n}}  \right) 
    = \OPT \cdot \left(2\cdot b^{\frac{n-1}{2n}} - b^{\frac{1}{2n}}\right),
\end{align*}

Rearranging this inequality completes the proof.
\end{proof}

\subsection{Optimality of Single-Threshold Algorithm}

In this subsection, for every $n\geq 2$, we provide a hard instance and prove that the competitive ratio of any single-threshold algorithm is at least $\Omega(b^{\frac{n-1}{2n}})$, even if the algorithm knows the distribution.
In other words, the hardness comes from the limited flexibility to set the thresholds, not because of the unawareness of the distribution.

\begin{theorem}
\label{thm:lb-single-threshold}
    For cost-minimization IID prophet inequality with a known $b$-bounded distribution, every single-threshold algorithm has a competitive ratio of at least $\frac{1-b^{-1}}{6} \cdot b^{\frac{n-1}{2n}}$.
\end{theorem}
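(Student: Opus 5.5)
The plan is to exhibit one fixed $b$-bounded distribution, known to the algorithm, and show that every choice of threshold $\theta$ gives ratio $\Omega(b^{\frac{n-1}{2n}})$. I would use a three-atom distribution supported on $\{1,\sqrt b,b\}$, given by survival probabilities
\[
G(x)=s \text{ for } 1\le x<\sqrt b,\qquad G(x)=r \text{ for } \sqrt b\le x<b,
\]
with $s=b^{-\frac{1}{2n}}$ and $r=b^{-\frac1n}\le s$. So $X=1$ with probability $1-s$, $X=\sqrt b$ with probability $s-r$, and $X=b$ with probability $r$. The parameters are chosen so that each atom contributes $O(1)$ to the prophet:
\[
\OPT=\int_0^b G(x)^n\,\d x = 1+s^n(\sqrt b-1)+r^n(b-\sqrt b)\le 1+1+1=3 .
\]
The remaining work is to show $\ALG=\Omega\bigl((1-b^{-1})\,b^{\frac{n-1}{2n}}\bigr)$ for every $\theta$.

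I would split into two cases according to $\theta$; only the position of $\theta$ relative to the atoms matters.

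\emph{Case $\theta<\sqrt b$.} Only realizations equal to $1$ are accepted among the first $n-1$ variables. With probability $G(\theta)^{n-1}=s^{n-1}=b^{-\frac{n-1}{2n}}$ all of these are rejected, and the algorithm then pays $\E[X]\ge r\,b=b^{\frac{n-1}{n}}$ on $X_n$. Hence
\[
\ALG\ge b^{-\frac{n-1}{2n}}\cdot b^{\frac{n-1}{n}}=b^{\frac{n-1}{2n}} .
\]
The same computation also follows from the exact formula for $\ALG$ in the proof of Theorem~\ref{thm:ub-single-threshold}, keeping only the last term $G(\theta)^{n-1}\E[X]$.

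\emph{Case $\theta\ge\sqrt b$.} The algorithm accepts $X_1$ whenever $X_1\in\{1,\sqrt b\}$. In particular it pays $\sqrt b$ whenever $X_1=\sqrt b$, so
\[
\ALG\ge (s-r)\sqrt b=b^{\frac{n-1}{2n}}\bigl(1-b^{-\frac{1}{2n}}\bigr).
\]
Dividing by $\OPT\le 3$ in both cases yields a bound of the form $c\cdot b^{\frac{n-1}{2n}}$.

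The main obstacle is the constant in the second case. The factor $1-b^{-\frac{1}{2n}}$ degrades like $\frac{\ln b}{2n}$ for large $n$, which is weaker than the stated $\frac{1-b^{-1}}{6}$. There are two remedies I would try.

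The first is to move the middle atom from $\sqrt b$ to a lower point $v$ and re-tune the masses, so that accepting the middle value is still expensive while its probability $s-r$ stays a constant fraction of $s$. For example, take $r=s^2$ when $b$ is large, and treat small $b$ separately using the trivial ratio bound $\ge1$.

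The second is to replace the atoms by a continuous piece on $[1,\sqrt b]$, chosen so that the conditional cost $\E[X\mid X\le\theta]$ is large for every $\theta\ge\sqrt b$, while the mass of $G^n$ on that piece stays $O(1)$.

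In either version I would write $\ALG$ exactly via Lemma~\ref{lem:conditioned-expectation}, as in Theorem~\ref{thm:ub-single-threshold}. Then I would lower bound the first-round acceptance term and the final-round term separately, and take the minimum over $\theta$. I expect the tuned constants to produce $\frac{1-b^{-1}}{6}$, with the $6$ coming from $\OPT\le 3$ together with a factor-$2$ loss from balancing the two cases.
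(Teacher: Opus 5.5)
Your instance, the bound $\OPT\le 3$, and the case $\theta<\sqrt b$ all coincide with the paper's and are correct. The gap is in the case $\theta\ge\sqrt b$. Crediting only the event $X_1=\sqrt b$ gives $\ALG\ge(1-y)\,b^{\frac{n-1}{2n}}$ with $y=b^{-\frac{1}{2n}}$. Since $1-b^{-1}=(1-y)(1+y+\cdots+y^{2n-1})$, this falls short of $\frac{1-b^{-1}}{2}\,b^{\frac{n-1}{2n}}$ whenever $y+\cdots+y^{2n-1}>1$, which already happens for $n=2$, $b=4$. So the stated constant is not proved. Your two remedies are only expectations, and they are not needed: the instance is fine, and the loss is entirely in the accounting. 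In fact the tuning $r=s^2$ you suggest is already in place, since $b^{-\frac1n}=(b^{-\frac{1}{2n}})^2$.

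The missing step is to charge all rounds, not just the first.

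If $\theta\ge b$, then $\ALG=\E[X]\ge rb=b^{\frac{n-1}{n}}$.

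If $\sqrt b\le\theta<b$, the threshold passes exactly the values in $\{1,\sqrt b\}$. View the algorithm as applying $\theta$ to all $n$ variables and taking $X_n$ only when none passes. With probability $1-r^n=1-b^{-1}$ some variable passes, and the first one to pass is distributed as $X\mid X\le\sqrt b$. Hence
\[
\ALG\ \ge\ (1-b^{-1})\cdot\frac{(1-s)+(s-s^2)\sqrt b}{1-s^2}\ =\ (1-b^{-1})\cdot\frac{1+b^{\frac{n-1}{2n}}}{1+b^{-\frac{1}{2n}}}\ \ge\ \frac{1-b^{-1}}{2}\,b^{\frac{n-1}{2n}} .
\]
The factor $1-s$ cancels here, and that is exactly what your first-round bound loses. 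When $s$ is near $1$, most rounds show $b$ and are rejected, so the algorithm typically keeps going until it sees a value in $\{1,\sqrt b\}$. That value is $\sqrt b$ with conditional probability $\frac{s}{1+s}\approx\frac12$, whereas your bound credits only the first round's probability $s(1-s)\approx 0$.

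Dividing by $\OPT\le 3$ gives $\frac{1-b^{-1}}{6}\,b^{\frac{n-1}{2n}}$. The factor $2$ comes from $\frac{1}{1+s}\ge\frac12$, not from balancing the two cases.
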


In what follows, we present a hard instance for a fixed $n$.
On a high level, we intend to construct an instance such that $\OPT$ is a constant, while the single-threshold algorithm is forced to set $\theta = \sqrt{b}$ and suffer a cost of $\Omega(b^{\frac{n-1}{2n}})$.
Consider a distribution with:
\begin{align*}
    X=
    \begin{cases}
        1 \quad & \text{with probability } 1-b^{-\frac{1}{2n}}, \\
        \sqrt{b} &  \text{with probability } b^{-\frac{1}{2n}} - b^{-\frac{1}{n}}, \\
        b \quad &  \text{with probability } b^{-\frac{1}{n}}.
    \end{cases}
\end{align*}

From this distribution, we have $G(1) = b^{-\frac{1}{2n}}$ and $G(\sqrt{b}) = b^{-\frac{1}{n}}$ (recall that $G(x) = \Pr [X > x]$). 
In the following, we prove that the competitive ratio of any single-threshold algorithm (with threshold being $\theta$) is at least $\frac{1-b^{-1}}{6} \cdot b^{\frac{n-1}{2n}}$.

For the offline optimum, we have
\begin{align*}
    \OPT \leq 1 + \sqrt{b} \cdot G(1)^n + b\cdot G(\sqrt{b})^n 
    =1 + \sqrt{b}\cdot b^{-\frac{1}{2}} + b \cdot b^{-1}
    =3.
\end{align*}

If the threshold $\theta$ is set to $b$, the algorithm is trivially bad-behaved, because $\E[X] \geq b\cdot b^{-\frac{1}{n}} = b^{\frac{n-1}{n}}$. 
Therefore, we assume that $\theta < b$.
For convenience, we use $T$ to denote the event that the algorithm receives a variable whose value is at most $\theta$ and $\bar{T}$ to denote that $T$ does not happen. 
Note that it is possible that the algorithm accepts the last variable $X_n$ and it happens to be at most $\theta$.
We can equivalently interpret the algorithm as using $\theta$ to decide \emph{all} variables (including $X_n$), but are forced to select $X_n$ if no such event happens.
Such an interpretation will also be used in the next section.
Therefore, we can express the expected cost of the algorithm as 
\begin{align*}
    \ALG = \Pr[T] \cdot \E[X\mid X\leq \theta] + \Pr[\bar{T}] \cdot \E[X\mid X> \theta].
\end{align*}

We consider two cases:
\begin{itemize}
    \item 
    Case 1: $\theta<\sqrt{b}$. In this case, the algorithm accepts only variables whose values are $1$ in the first $n-1$ steps, and we have
    \begin{align*}
        \ALG &= \left(1-G(1)^n\right) \cdot 1 + G(1)^n \cdot \E[X\mid X>1] \\
        &= 1- G(1)^n + G(1)^{n-1} \cdot \left(\sqrt{b} \cdot \left(b^{-\frac{1}{2n}} - b^{-\frac{1}{n}}\right) + b \cdot b^{-\frac{1}{n}}\right) \\
        &\geq 1-b^{-\frac{1}{2}} + b^{-\frac{n-1}{2n}} \cdot b \cdot b^{-\frac{1}{n}} \\
        &=1-b^{-\frac{1}{2}} + b^{\frac{n-1}{2n}}
        \geq b^{\frac{n-1}{2n}}.
    \end{align*}
    \item 
    Case 2: $\theta \geq \sqrt{b}$. Intuitively, due to the large threshold, event $T$ is very likely to happen, and when it happens, the algorithm has a high probability of receiving $\sqrt{b}$ instead of $1$. Specifically, we have
    \begin{align*}
        \ALG &\geq \left(1-G(\sqrt{b})^n\right) \cdot \E[X\mid X\leq \sqrt{b}] \\
        &= \frac{1-G(\sqrt{b})^n}{1-G(\sqrt{b})} \cdot \left( \left(1-G(1)\right)\cdot 1 + \left(G(1)-G(\sqrt{b})\right) \cdot \sqrt{b}\right) \\
        &=\frac{1-b^{-1}}{1-b^{-\frac{1}{n}}} \cdot \left( \left(1-b^{-\frac{1}{2n}}\right)\cdot 1 + \left(b^{-\frac{1}{2n}}-b^{-\frac{1}{n}}\right) \cdot \sqrt{b}\right)\\
        &=\frac{1-b^{-1}}{1-b^{-\frac{1}{n}}} \cdot\left(1-b^{-\frac{1}{2n}}\right) \cdot \left(1 + \sqrt{b} \cdot b^{-\frac{1}{2n}}\right) \\
        &= \frac{1-b^{-1}}{1+b^{-\frac{1}{2n}}} \cdot \left(1+b^{\frac{n-1}{2n}}\right) \geq \frac{1-b^{-1}}{2} \cdot b^{\frac{n-1}{2n}}.
    \end{align*}

    Since $b>1$ is a known constant for the algorithm, this lower bound is $\Omega(b^{\frac{n-1}{2n}})$.
\end{itemize}

Since the expected cost of the offline optimum is at most $3$, the competitive ratio of any single-threshold algorithm is at least $\frac{1-b^{-1}}{6} \cdot b^{\frac{n-1}{2n}}$, thereby proving Theorem~\ref{thm:lb-single-threshold}.

Recall that we also provided a hard instance (with $G(x) = x^{-1/n}$) in Section~\ref{ssec:lb-online-optimal} for the optimal online algorithm.
In fact, it can be shown that single-threshold algorithms also perform badly on the instance. 
However, the derived lower bound on the competitive ratio is slightly weaker than what we have shown above (see Appendix~\ref{sec:optimality-better}).

\subsection{Multi-threshold Algorithm}
\label{ssec:MTA}

As we have shown above, due to their limited flexibility, single-threshold algorithms fail to match the ratio of the optimal online algorithm for $n \geq 3$. This limitation naturally motivates the study of the more general class of multi-threshold algorithms, where a key challenge is to determine suitable (increasing) thresholds for the distribution-oblivious algorithm.

Recall that in Section~\ref{ssec:ub-online-optimal}, we proved that the expected cost of the optimal online algorithm with $k$ variables satisfies
$V_k \leq \OPT^{1-(1-1/n)^{k}} \cdot b^{(1-1/n)^k}$.
On the other hand, in Section~\ref{ssec:lb-online-optimal}, we provided a hard instance for which we have $V_k \geq b^{(1-1/n)^k}$.
In the natural regime where $\OPT = O(1)$, these bounds suggest that $V_k = \Theta(b^{(1-1/n)^k})$, which is independent of the underlying distribution. Furthermore, when $n = 2$, this characterization is consistent with the choice of $\theta = \sqrt{b}$ for the single-threshold algorithm.
Motivated by this evidence, we propose a multi-threshold distribution-oblivious algorithm whose competitive ratio matches the lower bound $b^{(1-1/n)^n}$ we derived for the optimal online algorithm.
The algorithm fixes a threshold $\theta_i$ to each variable $X_i$, where the thresholds are defined by
\begin{equation*}
    \theta_i = b^{\left(1-\frac{1}{n}\right)^{n-i}}, \quad \forall i\in \{1,2,\ldots,n\}.
\end{equation*}
By construction, the thresholds satisfy $\theta_1 \leq \theta_2 \leq \cdots \leq \theta_n = b$.

\begin{tcolorbox}[title=Multi-threshold Algorithm, label=algorithm:MTA]
    \textbf{Set threshold $\theta_i = b^{(1-1/n)^{n-i}}$ for every $i \in \{1,2,\ldots,n\}$.} 

    \smallskip
    
    \textbf{Upon the arrival of $X_i$, where $i=1,\ldots,n$:}
    \begin{itemize}
        \item
        if $X_i \leq \theta_i$, accept $X_i$ and terminate.
    \end{itemize}
\end{tcolorbox}

We first provide a characterization for the online algorithm and derive an upper bound on its expected cost.
Similar to our previous analysis for the single-threshold algorithms, we have
\begin{align*}
    \ALG & = \sum_{t=1}^n \left( \Pr[X_t \text{ is selected}]\cdot \E[X_t \mid X_t \leq \theta_t] \right) \\
    & = \sum_{t=1}^n \left( \left(\prod_{i=1}^{t-1} G(\theta_i)\right) \cdot F(\theta_t)\cdot \E[X \mid X \leq \theta_t] \right) \\
    & = \sum_{t=1}^{n} \left( \left( \prod_{i=1}^{t-1} G(\theta_i) \right) \cdot \left( \int_{0}^{\theta_t} G(s) \,\d s - \theta_t \cdot G(\theta_t)  \right) \right) \\
    &= \sum_{t=1}^{n} \left( \left( \prod_{i=1}^{t-1} G(\theta_i) \right) \cdot  \int_{0}^{\theta_t} G(s) \,\d s  \right) - \sum_{i=1}^{n} \left( \left( \prod_{i=1}^{t} G(\theta_i) \right) \cdot \theta_t\right) \\
    &= \int_{0}^{\theta_1}G(s) \,\d s + \sum_{t=2}^{n} \left( \left( \prod_{i=1}^{t-1} G(\theta_i) \right) \cdot  \left(\int_{0}^{\theta_t} G(s) \,\d s  -\theta_{t-1} \right) \right) - \left( \prod_{i=1}^{n} G(\theta_i) \right)\cdot \theta_n \\
    &\leq \int_{0}^{\theta_1}G(s) \,\d s + \sum_{t=2}^{n} \left( \left(\prod_{i=1}^{t-1} G(\theta_i) \right)\left(\int_{0}^{\theta_t} G(s) \,\d s -\theta_{t-1} \right) \right) 
\end{align*}
where the second equality holds since $X_t$ is selected if and only if all variables $X_1,\ldots,X_{t-1}$ exceed the corresponding thresholds and $X_t \leq \theta_t$, and the third equality holds by Lemma~\ref{lem:conditioned-expectation}.

\paragraph{High-level Idea.}
Intuitively, since our algorithm is distribution-oblivious, the survival function $G(\cdot)$ could take any form, and it is hard to give a straightforward upper bound.
On the other hand, we can fully control the thresholds $\theta_1,\ldots,\theta_n$.
Therefore, similar to the analysis for the single-threshold case, the high-level idea is to find a way to remove the dependence on the function $G(\cdot)$ in the upper bound for $\ALG$, i.e., upper bound $\ALG$ in terms of $\OPT$ and $\theta_1,\ldots,\theta_n$.
Then we can use the specific form of the thresholds to obtain an upper bound on the competitive ratio.

\medskip

Recall that in Section~\ref{sec:preliminaries}, we provide upper bounds for $G(\theta_i)$ and $\int_{0}^{\theta_i} G(s)\,\d s$ in terms of $\OPT$ and $\theta_i$. Using Lemma~\ref{lemma:upper-bound-G-by-OPT} and Lemma~\ref{lemma:upper-bound-G-integral}, we can reformulate the upper bound for $\ALG$ as
\begin{align*}
    \ALG & \leq \OPT^{1/n}\cdot \theta_1^{\frac{n-1}{n}} +  \sum_{t=2}^{n} \left( \left(\prod_{i=1}^{t-1} G(\theta_i) \right)\left(\OPT^{1/n} \cdot \theta_t^{\frac{n-1}{n}} -\theta_{t-1} \right) \right).
\end{align*}

Recall that $\theta_t = b^{(\frac{n-1}{n})^{n-t}}$.
Therefore we have $\theta_n = b$, $\theta_1 = b^{(\frac{n-1}{n})^{n-1}}$, and $\theta_{t-1} = \theta_t^{\frac{n-1}{n}}$ for all $t \in \{2,\ldots,n\}$. 
Consequently, for all $t \in \{2,\cdots,n\}$ we have
\begin{equation*}
    \OPT^{\frac{1}{n}} \cdot \theta_t^{\frac{n-1}{n}} - \theta_{t-1} \geq 0.
\end{equation*}

Thus, using Lemma~\ref{lemma:upper-bound-G-by-OPT}, the $\ALG$ can be upper bounded by
\begin{align*}
    \ALG & \leq \OPT^{\frac{1}{n}}\cdot \theta_1^{\frac{n-1}{n}} + \sum_{t=2}^{n}\left( \left(\prod_{i=1}^{t-1} \left( \frac{\OPT}{\theta_i}\right)^{\frac{1}{n}} \right) \cdot \left(\OPT^{\frac{1}{n}} \cdot \theta_t^{\frac{n-1}{n}} -\theta_{t-1} \right) \right) \\
    &= \OPT^{\frac{1}{n}} \cdot \theta_1^{\frac{n-1}{n}} + \sum_{t=2}^{n} \left( \frac{\OPT^{\frac{t-1}{n}}}{\left(\prod_{i=1}^{t-1} \theta_i\right)^{\frac{1}{n}}} \cdot \left(\OPT^{\frac{1}{n}} \cdot \theta_t^{\frac{n-1}{n}} -\theta_{t-1} \right)\right) \\
    &= \OPT^{\frac{1}{n}} \cdot \theta_1^{\frac{n-1}{n}} + \sum_{t=2}^{n} \left( \frac{\OPT^{\frac{t-1}{n}}\cdot \theta_{t-1}}{\left(\prod_{i=1}^{t-1} \theta_i\right)^{\frac{1}{n}}} \cdot \left(\OPT^{\frac{1}{n}} - 1 \right)\right)
\end{align*}
Next, we show a property regarding the threshold sequence.

\begin{lemma} \label{lemma:theta-t-over-prod=theta-1}
    For all $t\in \{2,3,\ldots,n\}$, we have
    \begin{equation*}
        \frac{\theta_{t-1}}{( \prod_{i=1}^{t-1} \theta_i )^{\frac{1}{n}}} = \theta_1^{\frac{n-1}{n}}.
    \end{equation*}
\end{lemma}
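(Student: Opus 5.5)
The identity follows by comparing exponents of $b$. Write $\beta = 1-\frac{1}{n}$, so that $\theta_i = b^{\beta^{n-i}}$. Taking $\log_b$ of both sides, the claim becomes
\begin{equation*}
\beta^{n-t+1} - \frac{1}{n}\sum_{i=1}^{t-1} \beta^{n-i} = \beta \cdot \beta^{n-1} = \beta^{n}.
\end{equation*}
Here the left side is the exponent of $\theta_{t-1}/(\prod_{i=1}^{t-1}\theta_i)^{1/n}$, and the right side is the exponent of $\theta_1^{\frac{n-1}{n}}$. Since $b>1$, the map $x\mapsto b^x$ is injective, so this exponent identity is equivalent to the lemma.

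First I evaluate the geometric sum. Reindex with $j = n-i$, so $j$ runs from $n-t+1$ to $n-1$. Using $1-\beta = \frac{1}{n}$,
\begin{equation*}
\frac{1}{n}\sum_{j=n-t+1}^{n-1}\beta^{j} = (1-\beta)\cdot\frac{\beta^{n-t+1}-\beta^{n}}{1-\beta} = \beta^{n-t+1}-\beta^{n}.
\end{equation*}
Subtracting this from $\beta^{n-t+1}$ leaves exactly $\beta^n$, which proves the identity for every $t\in\{2,\ldots,n\}$.

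As a cross-check, the lemma can also be proved by induction on $t$. For $t=2$ the left side is $\theta_1/\theta_1^{1/n} = \theta_1^{\frac{n-1}{n}}$, which is immediate. For the step from $t$ to $t+1$, the left side is multiplied by $\theta_t/(\theta_{t-1}\theta_t^{1/n}) = \theta_t^{\frac{n-1}{n}}/\theta_{t-1}$. This ratio equals $1$ by the relation $\theta_{t-1} = \theta_t^{\frac{n-1}{n}}$ noted just before the lemma, so the left side is unchanged.

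The only care needed is the index bookkeeping in the geometric sum and the cancellation $1-\beta = 1/n$.
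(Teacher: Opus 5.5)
Your proof is correct and follows the paper's own argument: take $\log_b$ of both sides and check that $\beta^{n-t+1} - \frac{1}{n}\sum_{i=1}^{t-1}\beta^{n-i} = \beta^{n}$, with $\beta = 1-\frac{1}{n}$. You write out the geometric-sum step that the paper only states. Your inductive cross-check via $\theta_{t-1} = \theta_t^{\frac{n-1}{n}}$ is also valid, but it is not needed.
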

\begin{proof}
    For ease of notation, we consider the exponent of the LHS with base $b$:
    \begin{equation*}
        \ln_b\left( \frac{\theta_{t-1}}{\left( \prod_{i=1}^{t-1} \theta_i \right)^{\frac{1}{n}}}\right) = \left( \frac{n-1}{n} \right)^{n-t+1} - \frac{1}{n}\cdot \sum_{i=1}^{t-1} \left( \frac{n-1}{n} \right)^{n-i} = \left( \frac{n-1}{n} \right)^{n}.
    \end{equation*}

    Therefore, we have $\frac{\theta_{t-1}}{( \prod_{i=1}^{t-1} \theta_i )^{\frac{1}{n}}} = b^{(\frac{n-1}{n})^{n}} = \theta_1^{\frac{n-1}{n}}$.
\end{proof}

Using Lemma~\ref{lemma:theta-t-over-prod=theta-1}, we can further simplify the upper bound for $\ALG$:
\begin{align*}
    \ALG &\leq \OPT^{\frac{1}{n}} \cdot \theta_1^{\frac{n-1}{n}} + \sum_{t=2}^{n} \left(\OPT^{\frac{t-1}{n}}\cdot \theta_1^{\frac{n-1}{n}} \cdot \left(\OPT^{\frac{1}{n}}  -1 \right)\right) \\
    &= \theta_1^{\frac{n-1}{n}} \cdot \left(\OPT^{\frac{1}{n}} + \sum_{t=2}^{n} \left(\OPT^{\frac{t}{n}}  - \OPT^{\frac{t-1}{n}} \right)\right) \\
    &= \theta_1^{\frac{n-1}{n}} \cdot \OPT.
\end{align*}
Therefore, we have the following upper bound on the competitive ratio
\begin{equation*}
    \frac{\ALG}{\OPT} \leq \theta_1^{\frac{n-1}{n}} = b^{(1-\frac{1}{n})^n},
\end{equation*}
which matches the upper bound we derived in Section~\ref{ssec:ub-online-optimal} for the optimal online algorithm, and thus is near-optimal (up to a logarithmic factor). 

\paragraph{Extension to Limited Thresholds.} In Appendix~\ref{sec:k-threshold}, we show that if we are only allowed to use $k$ thresholds, say $k=O(1)$, then we can still obtain a competitive ratio of $O\left(\ln b \cdot b^{\left(1-1/(k+1)\right)^{k-1}}\right)$. 
This additional result shows that with a constant number of thresholds, we can approximate the optimal online algorithm very well.

\subsection{Numerical Results}
\label{ssec:numerical}

So far, we have established a nearly tight characterization of the competitive ratios for both single- and multi-threshold distribution-oblivious algorithms. While these ratios are polynomial in $b$, we show that their numerical values remain quite small due to the small exponent.
We summarize the competitive ratio for $n\in\{3,\infty\}$ and $b\in \{2,3,\ldots,10\}$ in Table~\ref{table:true-bounds-n=3} and Table~\ref{table:true-bounds-n=infty}.

\begin{table}[htbp]
\centering
\begin{minipage}{0.48\linewidth}
\centering
\caption{Competitive ratio guarantees for different $b$ when $n=3$.}
\label{table:true-bounds-n=3}
\begin{tabular}{ccc}
\toprule
\(b\) & Single-threshold & Multi-threshold \\
\midrule
2 & 1.398 & 1.228 \\
3 & 1.684 & 1.385 \\
4 & 1.915 & 1.508 \\
5 & 2.113 & 1.612 \\
6 & 2.287 & 1.701 \\
7 & 2.443 & 1.780 \\
8 & 2.586 & 1.852 \\
9 & 2.718 & 1.918 \\
10 & 2.842 & 1.979 \\
\bottomrule
\end{tabular}
\end{minipage}
\hfill
\begin{minipage}{0.48\linewidth}
\centering
\caption{Competitive ratio guarantees for different $b$ when $n=\infty$.}
\label{table:true-bounds-n=infty}
\begin{tabular}{ccc}
\toprule
\(b\) & Single-threshold & Multi-threshold \\
\midrule
2 & 1.829 & 1.291 \\
3 & 2.465 & 1.499 \\
4 & 3.000 & 1.666 \\
5 & 3.473 & 1.808 \\
6 & 3.899 & 1.934 \\
7 & 4.292 & 2.046 \\
8 & 4.657 & 2.149 \\
9 & 5.000 & 2.245 \\
10 & 5.325 & 2.333 \\
\bottomrule
\end{tabular}
\end{minipage}
\end{table}

\section{Non-IID Distributions}
\label{sec:non-IID}

In this section, we study the non-IID prophet inequality, where there are $n$ independent random variables $X_1,\ldots,X_n$, with variable $X_i$ drawn from distribution $D_i$. In the prophet inequality, the order of these random variables is adversarial. In the prophet secretary, the order of these random variables is uniformly random. We define $g_i = \Pr[X_i > \theta] = G_i(\theta)$ for $i=1\ldots,n$ where $G_i = 1 - F_i$ is the survival function of distribution $D_i$.
We find that the best possible competitive ratio for cost-minimization prophet inequality is $\Omega(\sqrt{b})$, even under the prophet secretary setting. Moreover, this ratio can be achieved using a distribution-oblivious single-threshold algorithm with $\theta = \sqrt{b}$.


We start with the non-IID prophet inequality. Interestingly, we show that there exists a single-threshold algorithm that is $2\sqrt{b}$-competitive, and is asymptotically optimal.

\begin{theorem}
\label{thm:adversary-order-single-threshold}
For cost-minimization prophet inequality with general (non-IID) $b$-bounded distributions, there exists a single-threshold algorithm that achieves a competitive ratio of $2\cdot\sqrt{b}$.
\end{theorem}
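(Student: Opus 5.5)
We use the single-threshold algorithm with $\theta=\sqrt{b}$: for $i\le n-1$ accept $X_i$ if $X_i\le\theta$, and otherwise accept $X_n$. We then compare $\ALG$ with $\OPT=\E[\min_i X_i]$ by splitting on the event $T$ that some $X_i$ (including $X_n$) has $X_i\le\theta$, as in the analysis of Theorem~\ref{thm:lb-single-threshold}. Write $P=\prod_{i=1}^n g_i=\Pr[\bar T]$, where $g_i=G_i(\theta)$.

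\textbf{Upper bound on $\ALG$.} On $T$, the accepted value is either a variable $X_i\le\theta$ with $i\le n-1$, or $X_n$ with $X_n\le\theta$. In both cases the cost is at most $\theta=\sqrt b$. On $\bar T$, all of $X_1,\dots,X_{n-1}$ exceed $\theta$, so we take $X_n$ conditioned on $X_n>\theta$, which costs at most $b$. Hence
\[
\ALG\le (1-P)\sqrt b + P\, b .
\]

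\textbf{Lower bound on $\OPT$.} Since every value is at least $1$, and on $\bar T$ the minimum exceeds $\theta$, we get
\[
\OPT\ge (1-P)\cdot 1+P\sqrt b .
\]

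\textbf{Combining.} The ratio satisfies
\[
\frac{(1-P)\sqrt b+Pb}{(1-P)+P\sqrt b}\le \frac{\sqrt b\,(1-P)+\sqrt b\cdot P\sqrt b}{(1-P)+P\sqrt b}=\sqrt b ,
\]
since $Pb=\sqrt b\cdot P\sqrt b$. This already gives ratio $\sqrt b\le 2\sqrt b$.

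The step that needs care is the claim that on $\bar T$ the cost is at most $b$ while $\OPT$ still gains $\sqrt b$ on exactly the same event. Both statements hold because $\bar T$ is the event that all $n$ variables exceed $\theta$. The accounting breaks if one instead splits on the acceptance time, because $X_n$ is forced.

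If a cleaner route is preferred, one can bound the cost more coarsely: $\ALG\le\sqrt b+P\,\E[X_n\mid X_n>\theta]\le\sqrt b+Pb$, while $\OPT\ge\max(1,P\sqrt b)$. Each of the two terms is then at most $\sqrt b\cdot\OPT$, which gives exactly the stated factor $2\sqrt b$. I would present this version to match the theorem statement. Neither version uses distributional knowledge, so the algorithm is distribution-oblivious.
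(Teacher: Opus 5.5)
Your proof is correct, and your ``cleaner route'' is precisely the paper's argument. With $P=\prod_i g_i$, the paper bounds $\ALG\le\theta+bP$ and $\OPT\ge 1+(\theta-1)P\ge\max\{1,\theta P\}$, which gives $\theta+b/\theta=2\sqrt b$.

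Your main version uses the same split on the event that every variable exceeds $\theta$, but it is strictly sharper. The paper's lower bound $1+(\theta-1)P$ is literally your $(1-P)+P\sqrt b$. The paper loses the factor $2$ in two relaxations: it drops the weight $1-P$ in front of $\theta$, and it then weakens the denominator to a maximum.

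Your comparison in fact holds realization by realization. On $T$ the cost is at most $\sqrt b\le\sqrt b\cdot\min_i X_i$, and on $\bar T$ it is at most $b<\sqrt b\cdot\min_i X_i$. So $\ALG\le\sqrt b\cdot\OPT$ uses neither independence nor anything beyond the support being $[1,b]$. This is the classical reservation-price argument from worst-case online search.

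The constant $\sqrt b$ is tight for this algorithm: take $n=2$, $X_1\equiv\sqrt b$, $X_2\equiv 1$. It also leaves only a factor-$2$ gap to the paper's $\sqrt b/2$ lower bound, rather than a factor $4$. So there is no need to retreat to the weaker version to match the statement, since a ratio of $\sqrt b$ already implies the claimed $2\sqrt b$.
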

\begin{proof}
 We consider the single-threshold algorithm and use $\theta = \sqrt{b}$ as the threshold.  
As in the previous analysis for the IID case, if the algorithm accepts a variable below the threshold, the algorithm incurs a cost of at most $\theta$; otherwise, it incurs a cost of at most $b$. Thus, we have
\begin{align*}
    \ALG \leq \theta + b \cdot \prod_{i=1}^{n} g_i.
\end{align*}

For the offline optimum, we have
\begin{align*}
    \OPT &= 1 + \int_{1}^{b} \prod_{i=1}^{n} G_i(x) \,\d x 
    \geq 1+ \int_{1}^{\theta} \prod_{i=1}^{n} G_i(x) \,\d x \\
    &\geq 1+(\theta-1)\cdot \prod_{i=1}^{n} g_i
    \geq \max\left\{1, \theta\cdot \prod_{i=1}^{n} g_i \right\}.
\end{align*}

Then, we upper bound the competitive ratio as
\begin{align*}
    \frac{\ALG}{\OPT} \leq \frac{\theta +b\cdot \prod_{i=1}^{n} g_i}{\max \left\{1,\theta \cdot \prod_{i=1}^{n} g_i\right\}} 
    \leq \theta + \frac{b}{\theta} = 2\cdot \sqrt{b},
\end{align*}
where the last equality follows by $\theta = \sqrt{b}$.
\end{proof}

Next, we present an instance for which no (distribution-aware) algorithm can achieve a competitive ratio of $o(\sqrt{b})$. 
The instance is inspired by that of~\cite{esa/EsfandiariHLM15}, as we have mentioned in the introduction. 
Suppose there are two random variables $X_1$ and $X_2$ where 
\begin{align*}
    X_1 = \sqrt{b}, \quad 
    X_2 = \begin{cases}
    1 \quad \text{with probability } 1- \frac{1}{\sqrt{b}}, \\
    b \quad \text{with probability } \frac{1}{\sqrt{b}}.
\end{cases}
\end{align*}

Note that $\E[X_2] > \sqrt{b}$.
Therefore, regardless of which variable the algorithm selects, its expected cost is at least $\sqrt{b}$. 
However, the expected cost of the prophet is
\begin{equation*}
    \E[\min\{X_2, \sqrt{b}\}] = 1\cdot (1-1/\sqrt{b}) + \sqrt{b} \cdot 1/\sqrt{b}\leq 2,
\end{equation*}
which yields that no algorithm can achieve a competitive ratio better than ${\sqrt{b}}/{2}$. 

\paragraph{Prophet Secretary.}
It is natural to ask whether better competitive ratios can be obtained in the prophet secretary setting.
Unfortunately, we show that no algorithm can perform better than $\sqrt{b}/4$ in this setting.
For the same hard instance we presented above, since $X_1$ will be the first variable with probability $\frac{1}{2}$, the expected cost of the online algorithm is at least $\sqrt{b}/2$.
This yields a lower bound of $\sqrt{b}/4$ on the competitive ratio.
Note that we can generalize the instance to $n$ variables by adding $n-2$ dummy variables with $X_i = b$, which do not affect $\ALG$ or $\OPT$.

\section{Conclusion}

In this work, we study the cost-minimization prophet inequality under $b$-bounded distributions. We provide a nearly tight analysis of the optimal online algorithm and design near-optimal distribution-oblivious algorithms with single and multiple thresholds. Moreover, we extend our analysis to the non-IID setting and establish asymptotically optimal competitive ratios.
A natural direction for future work is to close the remaining $(1+\ln b)$ gap between the upper and lower bounds on the competitive ratio of the optimal online algorithm. It would also be interesting to identify other natural classes of distributions that admit competitive algorithms in the cost-minimization setting. Finally, understanding the optimal competitive ratio for the maximization version of the prophet inequality under $b$-bounded distributions remains an intriguing open problem.

\section*{Declaration on the Use of Generative AI}

All central research ideas, problem formulations, and mathematical results presented in this paper originated with the authors. Every theorem, proof, and technical claim included in the final manuscript was independently verified by the authors, who take full responsibility for the correctness, originality, and presentation of the work.

\bibliographystyle{alpha}
\bibliography{ref}

\newpage

\appendix
\section{A Continuous Perspective When \texorpdfstring{$n\to \infty$}{}}
\label{sec:continuous-analysis}

In this appendix, we provide an analysis under the continuous perspective for the case when $n\to \infty$, showing that the competitive ratio is $O(b^{1/e}\cdot \frac{\ln b}{\ln \ln b})$. In the continuous perspective, each variable arrives at time $t\in [0,1]$ chosen uniformly at random.
However, (for some technical reasons) we do not assume that the arrival times are independent.
Instead, we assume that exactly $t\cdot n$ variables arrive at or before time $t$.
This is equivalent to assuming random arrival order and setting the arrival time of the $i$-th agent as $t = i/n$.
An interesting observation regarding the multi-threshold algorithm we presented in Section~\ref{sec:oblivious-distributions} is that when $k \to \infty$, it becomes an online algorithm with time-dependent thresholds $\theta: [0,1] \to [1,b]$ defined as follows:
\begin{equation*}
    \theta(t) = b^{e^{t-1}}.
\end{equation*}

Notice that $\theta(t)$ is strictly increasing, and we have $\theta(0) = b^{1/e}$ and $\theta(1)=b$.
The online algorithm accepts a variable $X$ arriving at time $t$ if $X \leq \theta(t)$.
Note that the online algorithm is guaranteed to accept a variable, because $\theta(1) = b$.
Let $u(t)$ denote the probability that the algorithm has not accepted any variable by time $t$.
Note that $u(t)$ is decreasing in $t$, and we have $u(0) = 1$ and $u(1) = 0$.
Then, we write an upper bound on the cost of the algorithm:
\begin{align*}
    \ALG \leq \int_{0}^{1} (-u'(t)\cdot \theta(t)) \,\d t 
    = \theta(0) + \int_{0}^{1} u(t) \cdot \theta'(t) \,\d t,
\end{align*}
where the equality holds by integration by parts.

For $\OPT$, we have
\begin{align*}
    \OPT = \int_{0}^{b} G(x)^n \,\d x 
    \geq 1+ \int_{\theta(0)}^{\theta(1)} G (x)^n \,\d x 
    = 1 + \int_{0}^{1} G(\theta(t))^n \cdot \theta'(t) \,\d t.
\end{align*}

Therefore, we have
\begin{equation*}
    \frac{\ALG}{\OPT} \leq \theta(0) + \int_{0}^{1} \frac{u(t) \cdot \theta'(t)}{1 + \int_{0}^{1} G(\theta(t))^n \cdot \theta'(t) \,\d t} \,\d t.
\end{equation*}

\begin{lemma}\label{lemma:derivatives-of-ALG-over-OPT}
For all $t\in [0,1]$, we have
\begin{equation*}
    \frac{u(t) \cdot \theta'(t)}{1 + \int_{0}^{1} G(\theta(t))^n \cdot \theta'(t) \d t} \leq \theta(0)\cdot (\ln b)^{1-t}\cdot \sqrt{e}.
\end{equation*}    
\end{lemma}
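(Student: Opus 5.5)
The plan is to eliminate the unknown survival function by an AM--GM (Jensen) argument that links $u(t)$ to the integral in the denominator. Then I maximize over the one remaining free quantity, $u(t)$ itself, and what is left is an explicit function of $b$ and $t$.

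\textbf{Step 1 (express $u$ via $G$).} In the continuous model the variables arriving in $[0,t]$ are those with index $i\le tn$, and each is rejected with probability $G(\theta(i/n))$. So $u(t)=\prod_{i\le tn}G(\theta(i/n))$, which in the limit is
\begin{equation*}
u(t)=\exp\left(\int_0^t \ln a(s)\,\d s\right),\qquad a(s):=G(\theta(s))^n .
\end{equation*}
Writing the denominator as $D:=1+\int_0^1 a(s)\theta'(s)\,\d s$, I use the truncation $D\ge 1+\int_0^t a(s)\theta'(s)\,\d s$.

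\textbf{Step 2 (AM--GM on $[0,t]$).} By Jensen's inequality for $\ln$ with the uniform measure on $[0,t]$,
\begin{equation*}
\frac1t\int_0^t a\theta'\,\d s\ \ge\ \exp\left(\frac1t\int_0^t\ln a\,\d s+\frac1t\int_0^t\ln\theta'\,\d s\right)=u(t)^{1/t}\,\Gamma_t ,
\end{equation*}
where $\Gamma_t$ is the geometric mean of $\theta'$ on $[0,t]$. Hence
\begin{equation*}
\frac{u\,\theta'(t)}{D}\le \frac{u\,\theta'(t)}{\max\{1,\,t\,u^{1/t}\Gamma_t\}} .
\end{equation*}

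\textbf{Step 3 (maximize over $u\in[0,1]$).} There are two cases.
\begin{itemize}
\item If $t u^{1/t}\Gamma_t\le 1$, then $u\le (t\Gamma_t)^{-t}$, and the bound is at most $\theta'(t)(t\Gamma_t)^{-t}$.
\item Otherwise the bound equals $u^{1-1/t}\theta'(t)/(t\Gamma_t)$. Since $1-1/t\le 0$, this is non-increasing in $u$, so it is largest at the boundary value $u=(t\Gamma_t)^{-t}$, which gives the same quantity.
\end{itemize}
In both cases the left-hand side of the lemma is at most $\theta'(t)/(t\Gamma_t)^t$, and $G$ no longer appears.

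\textbf{Step 4 (explicit computation).} With $\theta(s)=b^{e^{s-1}}$ we get $\ln\theta'(s)=\ln\ln b+(s-1)+e^{s-1}\ln b$. Integrating over $[0,t]$ gives
\begin{equation*}
t\ln\Gamma_t=t\ln\ln b+\tfrac{t^2}{2}-t+(e^{t-1}-e^{-1})\ln b .
\end{equation*}
Therefore
\begin{equation*}
\ln\frac{\theta'(t)}{(t\Gamma_t)^t}=e^{-1}\ln b+(1-t)\ln\ln b+\left(2t-1-\tfrac{t^2}{2}-t\ln t\right).
\end{equation*}
The first term is $\ln\theta(0)$ and the second is $\ln\big((\ln b)^{1-t}\big)$. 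Let $h(t)=2t-1-t^2/2-t\ln t$. Then $h'(t)=1-t-\ln t\ge 0$ on $(0,1]$, so $h(t)\le h(1)=1/2$, and exponentiating yields the factor $\sqrt e$. At $t=0$ the claim follows directly, since $u=1$ and $\theta'(0)=\theta(0)\ln b/e$.

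\textbf{Main obstacle.} The delicate part is Step 1: justifying the continuous-limit formula for $u(t)$, including the fact that the sum over $i\le tn$ of $\ln G(\theta(i/n))$ becomes $\int_0^t n\ln G(\theta(s))\,\d s$. One also has to notice that one should apply Jensen with the uniform weight on $[0,t]$, rather than the weight $\theta'$, so that the exponent exactly reproduces $u(t)$. Once that is in place, the rest is a one-variable optimization and calculus.
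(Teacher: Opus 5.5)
Your proof is correct. Its core is the paper's: a continuous AM--GM (Jensen) step cancels the unknown $G$ against $u(t)$, and then $\int_0^t \ln\theta'(s)\,\mathrm{d}s$ is evaluated explicitly. The difference is how you normalize the AM--GM.

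The paper applies it on all of $[0,1]$, to the function equal to $G(\theta(s))^n\theta'(s)$ on $[0,t]$ and to $1$ on $(t,1]$. The leading $1$ in the denominator pays for $\int_t^1 1\,\mathrm{d}s = 1-t$, and since $\ln 1=0$ this gives directly $D \ge u(t)\exp\bigl(\int_0^t\ln\theta'(s)\,\mathrm{d}s\bigr) = u(t)\,\Gamma_t^{\,t}$. So $u(t)$ appears to the first power and cancels exactly, with no case split.

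Your Jensen on $[0,t]$ produces $u^{1/t}$ instead, which is why you need $\max\{1,\cdot\}$ and the optimization over $u$. In effect you use $\max\{1,ty\}\ge (ty)^t$, where the paper uses the weighted AM--GM $(1-t)+ty\ge y^t$. This costs you a factor $t^{-t}\in[1,e^{1/e}]$: you end with $\theta'(t)/(t\Gamma_t)^t$ rather than the paper's sharper $\theta'(t)/\Gamma_t^{\,t}$.

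The loss vanishes at $t=1$, which is exactly where the $t$-dependent exponent is maximized, so you still get the constant $\sqrt e$. The price is the monotonicity check $h'(t)=1-t-\ln t\ge 0$, where the paper only needs that $t^2/2-2t+1$ is decreasing on $[0,1]$.

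On your Step~1 concern: the paper simply takes $u(t)=\exp\bigl(n\int_0^t \ln G(\theta(s))\,\mathrm{d}s\bigr)$ as part of its continuous model. So no further justification is expected there relative to the paper's proof.
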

\begin{proof}
    Fix any $t\in [0,1]$, the probability $u(t)$ that the algorithm has not accepted any variable is exactly the probability that all variables arrived before time $t$ have values larger than the thresholds.
    Therefore we have
    \begin{equation*}
        u(t) = \exp\left( n\cdot \int_0^t \ln\left( G(\theta(s)) \right) \,\d s \right).
    \end{equation*}
    
    On the other hand, we have
    \begin{align*}
        1 + \int_{0}^{1} G(\theta(t))^n \cdot \theta'(t) \,\d t
        & \geq \int_{0}^{t} G(\theta(s))^n \cdot \theta'(s) \,\d s
        + \int_{t}^{1} 1 \,\d s \\
        & \geq \exp\left( \int_0^t \ln\left( G(\theta(s))^n \cdot \theta'(s) \right) \,\d s \right) \\
        & = \exp\left( n\cdot \int_0^t \ln\left( G(\theta(s)) \right) \,\d s \right)\cdot \exp\left( \int_0^t \ln \theta'(s) \,\d s \right),
    \end{align*}
    where the second inequality holds by the (continuous) AM-GM inequality.
    Therefore we have
    \begin{equation*}
        \frac{u(t) \cdot \theta'(t)}{1 + \int_{0}^{1} G(\theta(t))^n \cdot \theta'(t) \,\d t} \leq \frac{\theta'(t)}{\exp\left( \int_0^t \ln \theta'(s) \,\d s \right)}.
    \end{equation*}

    Hence, it remains to show that
    \begin{equation*}
        \exp\left( \int_0^t \ln \theta'(s) \,\d s \right) \geq 
        \frac{\theta'(t)}{\theta(0)}\cdot \frac{1}{(\ln b)^{1-t}\cdot \sqrt{e}}.
    \end{equation*}

    By definition of $\theta$, we have
    \begin{equation*}
        \theta'(s) = b^{e^{s-1}}\cdot e^{s-1}\cdot \ln b,
    \end{equation*}
    and
    \begin{equation*}
        \int_0^t \ln \theta'(s) \,\d s
        = \left( e^{t-1} - \frac{1}{e} \right)\cdot \ln b + \frac{t^2}{2} - t + t\cdot \ln\ln b.
    \end{equation*}

    Then we have
    \begin{align*}
        \exp\left( \int_0^t \ln \theta'(s) \,\d s \right)
        & = \frac{b^{e^{t-1}}}{\theta(0)} \cdot e^{\frac{t^2}{2}-t} \cdot (\ln b)^t \\
        & = \frac{\theta'(t)}{\theta(0)} \cdot e^{\frac{t^2}{2}-2t+1} \cdot (\ln b)^{t-1} \\
        & \geq \frac{\theta'(t)}{\theta(0)}\cdot \frac{1}{(\ln b)^{1-t}\cdot \sqrt{e}}. \qedhere
    \end{align*}
\end{proof}

Given Lemma~\ref{lemma:derivatives-of-ALG-over-OPT}, we have
\begin{align*}
    \frac{\ALG}{\OPT} & \leq \theta(0) + \sqrt{e}\cdot \theta(0)\cdot \int_0^1 (\ln b)^{1-t} \,\d t \\
    & \leq \left( 1+\sqrt{e}\cdot \frac{\ln b}{\ln\ln b} \right)\cdot \theta(0) \\
    & = \left( 1+\sqrt{e}\cdot \frac{\ln b}{\ln\ln b} \right)\cdot b^{1/e}.
\end{align*}

\section{Another Hard Instance for Single-Threshold Algorithms}
\label{sec:optimality-better}

We analyze the competitive ratio of the single-threshold algorithm with threshold $\theta$ on the hard instance shown in Section~\ref{ssec:lb-online-optimal} with survival function
\begin{align*}
    G(x) = 
    \begin{cases}
        x^{-\frac{1}{n}} \quad & 1\leq x < b,\\
        0 &x\geq b,
    \end{cases}
\end{align*}

We express $\ALG$ as 
\begin{align*}
    \ALG & = \sum_{i=1}^{n-1} F(\theta) \cdot G(\theta)^{i-1} \cdot \E[X\mid X\leq \theta] + G(\theta)^{n-1} \cdot \E[X] \\
    & = \left( \int_{0}^{\theta} G(x) \,\d x - \theta \cdot G(\theta) \right) \cdot \sum_{i=1}^{n-1} G(\theta)^{i-1} + G(\theta)^{n-1} \cdot \int_{0}^{b} G(x) \,\d x.
\end{align*}

By definition of $G(x)$, we obtain
\begin{align*}
    \ALG &= \left(\int_{0}^{\theta} x^{-\frac{1}{n}}\,\d x - \theta \cdot \theta^{-\frac{1}{n}}\right) \cdot \sum_{i=1}^{n-1} \theta^{-\frac{i-1}{n}} + \theta^{-\frac{n-1}{n}} \cdot \int_{0}^{b} x^{-\frac{1}{n}} \,\d x \\
    &=\left( \frac{1}{1-\frac{1}{n}} \cdot \theta^{\frac{n-1}{n}} - \theta^{\frac{n-1}{n}} \right) \cdot \sum_{i=1}^{n-1}\theta^{-\frac{i-1}{n}} + \theta^{-\frac{n-1}{n}} \cdot \frac{1}{1-\frac{1}{n}} \cdot b^{\frac{n-1}{n}} \\
    &= \frac{1}{n-1}\cdot \sum_{i=1}^{n-1} \theta^{\frac{n-i}{n}} + \frac{n}{n-1} \cdot \left( \frac{b}{\theta} \right)^{\frac{n-1}{n}} \\
    &= \frac{1}{n-1}\cdot \frac{\theta^{\frac{n-1}{n}} - 1}{1 - \theta^{-\frac{1}{n}}} + \frac{n}{n-1} \cdot \left( \frac{b}{\theta} \right)^{\frac{n-1}{n}}.
\end{align*}

Using $1+x\leq e^x$ for all $x$, we have
\begin{equation*}
    1 - \theta^{-\frac{1}{n}} \leq \frac{\ln \theta}{n}.
\end{equation*}

Therefore we have
\begin{equation*}
    \ALG \geq \frac{n}{n-1}\cdot \frac{\theta^{\frac{n-1}{n}} - 1}{\ln \theta} + \frac{n}{n-1} \cdot \left( \frac{b}{\theta} \right)^{\frac{n-1}{n}}.
\end{equation*}

If $\theta < \sqrt{b}$, the second term $(b/\theta)^{\frac{n-1}{n}}$ is at least $b^{\frac{n-1}{2n}}$.
If $\theta \geq \sqrt{b}$, then the first term is at least
\begin{align*}
    \frac{\theta^{\frac{n-1}{n}} - 1}{\ln \theta} \geq \frac{2}{\ln b}\cdot\left( b^{\frac{n-1}{2n}} - 1\right).
\end{align*}

As shown in Section~\ref{ssec:lb-online-optimal}, we have $\OPT = 1+\ln b$. Therefore, the competitive ratio is at least 
\begin{align*}
    \frac{b^{\frac{n-1}{2n}} - 1}{(1+\ln b)\cdot \ln b} = \Omega\left(\frac{b^{\frac{n-1}{2n}}}{\ln^2 b}\right).
\end{align*}

\section{Multi-Threshold Algorithm with Limited Thresholds}
\label{sec:k-threshold}

In this appendix, we present an algorithm with $k$ thresholds where $k$ divides $n$.

\begin{tcolorbox}[title=Multi-threshold Algorithm, label=algorithm:MTA-k]
    \textbf{Set threshold $\theta_i = b^{\left( \frac{k}{k+1} \right)^{k-i}}$ for every $i \in \{1,2,\ldots,k\}$.} 

    \smallskip
    
    \textbf{Upon the arrival of $X_i$, where $i=1,\ldots,n$:}
    \begin{itemize}
        \item
        If $X_i \leq \theta_{\lceil \frac{i\cdot k}{n} \rceil}$, accept $X_i$ and terminate.
    \end{itemize}
\end{tcolorbox}

The algorithm divides the entire sequence into $k$ phases of equal length. The threshold of the $i$-th phase is $\theta_i$. As in the previous analysis, we first characterize the online algorithm and derive an upper bound on its expected cost. 
By a similar analysis as in Section~\ref{ssec:MTA}, we have
\begin{align*}
    \ALG \leq \int_{0}^{\theta_1}G(s) \,\d s + \sum_{t=2}^{n} \left( \left(\prod_{i=1}^{t-1} G(\theta_{\lceil \frac{i\cdot k}{n} \rceil}) \right)\left(\int_{0}^{\theta_{\lceil \frac{i\cdot k}{n} \rceil}} G(s) \,\d s -\theta_{\lceil \frac{(i-1)\cdot k}{n} \rceil} \right) \right) 
\end{align*}
Note that if $\theta_{\lceil \frac{i\cdot k}{n} \rceil} = \theta_{\lceil \frac{(i-1)\cdot k}{n} \rceil}$, we have
\begin{align*}
    \int_{0}^{\theta_{\lceil \frac{i\cdot k}{n} \rceil}} G(s) \,\d s -\theta_{\lceil \frac{(i-1)\cdot k}{n} \rceil} \leq \lceil \frac{i\cdot k}{n} \rceil - \lceil \frac{(i-1)\cdot k}{n} \rceil = 0.
\end{align*}

Therefore, in each phase, only the first variable contributes a non-negative cost to $\ALG$. Thus, we drop all the negative terms and upper bound $\ALG$ by
\begin{align}
\label{eq:k-threshold-ALG}
    \ALG &\leq \int_{0}^{\theta_1} G(s) \,\d s + \sum_{z=2}^{k} \left(\left(\prod_{i=1}^{z-1} G(\theta_i)^{\frac{n}{k}} \right)\left( \int_{0}^{\theta_z} G(s) \,\d s - \theta_{z-1} \right)\right) \nonumber\\
    &\leq \int_{0}^{\theta_1} G(s) \,\d s + \sum_{z=2}^{k} \left(\left(\prod_{i=1}^{z-1} G(\theta_i)^{\frac{n}{k}} \right)\cdot \left( \theta_z - \theta_{z-1} \right)\right).
\end{align}

Using $\theta_z = b^{\left(\frac{k}{k+1}\right)^{k-z}}$ for $z \in \{1,2,\ldots,k\}$, we have
\begin{align*}
    \theta_z-\theta_{z-1} = b^{\left(\frac{k}{k+1}\right)^{k-z}} - b^{\left(\frac{k}{k+1}\right)^{k-z+1}} = \theta_z \cdot \left(1 - b^{\left(\frac{k}{k+1}\right)^{k-z+1} - \left(\frac{k}{k+1}\right)^{k-z}}\right) \leq \theta_z\cdot \frac{\ln b}{k+1},
\end{align*}
where the inequality holds by the standard inequality $e^{-x} \geq 1-x$.

Therefore, Inequality~\eqref{eq:k-threshold-ALG} becomes
\begin{align*}
    \ALG &\leq \int_{0}^{\theta_1} G(s) \,\d s + \sum_{z=2}^{k} \left( \left(\prod_{i=1}^{z-1} G(\theta_i)^{\frac{n}{k}}\right) \cdot \theta_z \cdot \frac{\ln b}{k+1} \right) \\
    &\leq \OPT^{\frac{1}{n}} \cdot \theta_1^{\frac{n-1}{n}} + \sum_{z=2}^{k} \left(\theta_z \cdot \frac{\ln b}{k+1} \cdot \prod_{i=1}^{z-1}\left(\frac{\OPT}{\theta_i}\right)^{\frac{1}{k}}  \right) \\
    &= \OPT^{\frac{1}{n}}\cdot \theta_1^{\frac{n-1}{n}} + \sum_{z=2}^{k} \frac{\theta_1}{k+1} \cdot \ln b \cdot \OPT^{\frac{z-1}{k}} \\
    &\leq \OPT \cdot \left( \theta_1 + \frac{k-1}{k+1}\cdot \ln b \cdot \theta_1 \right),
\end{align*}
where the equality holds by $\frac{\theta_z}{\prod_{i=1}^{z-1} \theta_i^{\frac{1}{k}}} = \theta_1$ for all $z \in \{1,\ldots,k\}$.

Therefore, the competitive ratio is
\begin{equation*}
    \frac{\ALG}{\OPT} \leq \theta_1 + \frac{k-1}{k+1}\cdot \ln b \cdot \theta_1 = O(\ln b \cdot b^{\left(\frac{k}{k+1}\right)^{k-1}})
\end{equation*}

\end{document}